\newcommand{\Prob}{\ensuremath{\operatorname{Pr}}}
\newcommand{\w}{\color{black}}
\definecolor{vibtilg}{HTML}{00FF99}
\definecolor{C_greedy}{HTML}{1F77B4}
\definecolor{C_NGC}{HTML}{FF7F0E}
\definecolor{C_growth}{HTML}{2CA02C}
\definecolor{C_LMC}{HTML}{D62728}
\definecolor{C_LS}{HTML}{9467BD}
\definecolor{C_ELS}{HTML}{8C564b}
\definecolor{C_PLS}{HTML}{E377C2}
\definecolor{C_rnd}{HTML}{0033CC}
\definecolor{des}{HTML}{edeb77}
\definecolor{cupgreen}{rgb}{0,0.498,0.208}
\definecolor{cupblue}{rgb}{0,0,.5}
\definecolor{cupred}{rgb}{1,0.04,0}
\definecolor{cuppink}{rgb}{0.925,0,0.545}
\definecolor{cupmagenta}{rgb}{0.624,0.161,0.424}
\definecolor{cupbrown}{rgb}{0.71,0.212,0.133}
\definecolor{TITLE}{rgb}{0,0,0}
\definecolor{midblue}{rgb}{0.00,0.0,0.80}
\definecolor{darkblue}{rgb}{0.00,0.00,0.45}
\definecolor{SECTION}{rgb}{0.50,0.00,1.00}
\definecolor{THM}{rgb}{0.8,0,0.1}
\definecolor{SEC}{rgb}{0,0,1}
\definecolor{calpolypomonagreen}{rgb}{0.12, 0.3, 0.17}
\definecolor{byzantine}{rgb}{0.64, 0.14, 0.54}
\definecolor{cadmiumgreen}{rgb}{0.0, 0.42, 0.24}
\definecolor{caribbeangreen}{rgb}{0.0, 0.6, 0.46}
\newtheorem{theorem}{{\color{THM} Theorem}}[section]
\theoremstyle{definition}
\numberwithin{equation}{section}
\newtheorem*{definition*}{\color{THM}Definition}
\newtheorem*{observation*}{\color{THM}Observation}
\def\NoNumber#1{{\def\alglinenumber##1{}\State #1}\addtocounter{ALG@line}{-1}}
\journal{Computers \& Operations Research}
\begin{document}
\begin{frontmatter}
    \title{Local Search Improvements for Soft Happy Colouring}
    
    \author[inst1]{Mohammad H. Shekarriz}
    \ead{m.shekarriz@deakin.edu.au}
    \affiliation[inst1]{
        organization={School of Information Technology, Deakin University},
        city={Geelong},
        state={VIC},
        country={Australia}}
    \author[inst1]{Dhananjay Thiruvady}
    \ead{dhananjay.thiruvady@deakin.edu.au}
    \author[inst1]{Asef Nazari}
    \ead{asef.nazari@deakin.edu.au}
    \author[inst2]{Wilfried Imrich}
    \ead{imrich@unileoben.ac.at}
    \affiliation[inst2]{
        organization={Montanuniversit\"at Leoben},
        city={Leoben},
        state={},
        postcode={A-8700},
        country={Austria}
    }
    \begin{abstract}
        For $0\leq \rho\leq 1$ and a coloured graph $G$, a vertex $v$ is $\rho$-happy if at least $\rho \deg(v)$ of its neighbours have the same colour as $v$. Soft happy colouring of a partially coloured graph $G$ is the problem of finding a vertex colouring $\sigma$ that preserves the precolouring and has the maximum number of $\rho$-happy vertices. It is already known that this problem is NP-hard and directly relates to the community structure of the graphs; under certain condition on the proportion of happiness $\rho$ and for graphs with community structures, the induced colouring by communities can make all the vertices $\rho$-happy. We show that when $0\leq \rho_1<\rho_2\leq 1$, a complete $\rho_2$-happy colouring has a higher accuracy of community detection than a complete $\rho_1$-happy colouring. Moreover, when $\rho$ is greater than a threshold, it is unlikely for an algorithm to find a complete $\rho$-happy colouring with colour classes of almost equal sizes. Three local search algorithms for soft happy colouring are proposed, and their performances are compared with one another and other known algorithms. Among them, the linear-time local search is shown to be not only very fast, but also a reliable algorithm that can dramatically improve the number of $\rho$-happy vertices. 
    \end{abstract}
    \begin{keyword}
        Soft happy colouring, heuristics, local search, stochastic block model 
    \end{keyword}
    
\end{frontmatter}

\section{Introduction}\label{Sec:Intro}
Vertex colouring partitions a graph's vertex set into disjoint ``colour'' sets, and solutions to many complex problems are presented in this form. 
One type that has recently attracted considerable interest is \emph{happy colouring}~\cite{ZHANG2015117}. This problem was introduced in 2015 by Zhang and Li as a vertex colouring with a maximum number of \emph{happy vertices}, those that have the same colour as their neighbours. One generalisation of this colouring is \emph{soft happy colouring}, which employs a \emph{proportion of happiness}, $0\leq \rho \leq 1$, and considers maximising the number of \emph{$\rho$-happy} vertices, those that at least $\rho \deg(v)$ of their neighbours have the same colour as them~\cite{ZHANG2015117,SHEKARRIZ2025106893}.

Finding a happy colouring, in a connected graph with some vertices precoloured, is a difficult problem and is mostly concerned with \emph{maximising the number of happy vertices in a partially coloured graph}. Several approaches have been developed for finding happy colourings. For example, in~\cite{Lewis2019265} integer programming based approaches {\w are} devised for happy colouring, in~\cite{Zhang2018}~ a randomized LP-rounding technique and a non-uniform approach are developed, in~\cite{thiruvady2020} and \cite{LEWIS2021105114} tabu search approaches are proposed, and in~\cite{THIRUVADY2022101188} evolutionary algorithms and hybrids of metaheuristics and matheuristics are investigated. 

The idea behind introducing happy colouring and its generalisations was \emph{homophily} in social networks~\cite{Homophily}, a concept that the \emph{community structure} of graphs can fully express. A \emph{community}~\cite{10.1007/978-3-540-48413-4_23} in a graph is a group of vertices of ``sufficient size'' whose adjacencies among themselves are denser than edges that connect them to other groups. Figure~\ref{fig:communities} illustrates this idea for a graph with 4 communities.

\begin{figure}
\centering
\begin{tikzpicture}[scale=0.8]

\tikzstyle{comm1} = [draw=green!20, fill=green!10, thick, smooth]
\tikzstyle{comm2} = [draw=orange!20, fill=orange!10, thick, smooth]
\tikzstyle{comm3} = [draw=magenta!20, fill=magenta!10, thick, smooth]
\tikzstyle{comm4} = [draw=red!20, fill=red!10, thick, smooth]

\tikzstyle{vertex} = [circle, draw, fill=white, minimum size=8pt, inner sep=0pt]
\tikzstyle{comm1v} = [circle, draw, fill=green!50, minimum size=8pt, inner sep=0pt]
\tikzstyle{comm2v} = [circle, draw, fill=orange!50, minimum size=8pt, inner sep=0pt]
\tikzstyle{comm3v} = [circle, draw, fill=magenta!50, minimum size=8pt, inner sep=0pt]
\tikzstyle{comm4v} = [circle, draw, fill=red!50, minimum size=8pt, inner sep=0pt]

\fill[comm1] (-5.5,0.5) .. controls (-4,3.2) and (-3,1.2) .. (-3.3,-0.5) .. controls (-4.5,-3) and (-6.3,-1) .. cycle;
\fill[comm2] (2.8,0.5) .. controls (4.5,3) and (5.9,1.2) .. (5.7,-0.5) .. controls (4.5,-2) and (3,-1) .. cycle;
\fill[comm3] (-1.1,2.3) .. controls (0.5,0.8) and (2,3.5) .. (1,4.5) .. controls (-1,4.8) and (-2,3.5) .. cycle;
\fill[comm4] (-0.65,-2.3) .. controls (0.6,-1.4) and (2.5,-1.5) .. (1.2,-4.3) .. controls (-0.5,-4.9) and (-1.5,-3.8) .. cycle;

\node[comm1v] (v1) at (-5.2,0.5) {};
\node[comm1v] (v2) at (-4,1.6) {};
\node[comm1v] (v3) at (-3.5,0.5) {};
\node[comm1v] (v4) at (-3.5,-0.5) {};
\node[comm1v] (v5) at (-4.5,-1.5) {};
\node[comm1v] (v6) at (-5.5,-0.5) {};

\node[comm2v] (v7) at (3,0.5) {};
\node[comm2v] (v8) at (4.5,1.5) {};
\node[comm2v] (v9) at (5.5,0.5) {};
\node[comm2v] (v10) at (5.5,-0.5) {};
\node[comm2v] (v11) at (4.3,-1) {};
\node[comm2v] (v12) at (4.5,0.5) {};

\node[comm3v] (v13) at (-1,2.5) {};
\node[comm3v] (v14) at (0,2.1) {};
\node[comm3v] (v15) at (1,3) {};
\node[comm3v] (v16) at (0,4.3) {};
\node[comm3v] (v17) at (-1,3.9) {};
\node[comm3v] (v18) at (0.9,4.3) {};

\node[comm4v] (v19) at (-0.5,-2.5) {};
\node[comm4v] (v20) at (0.3,-3) {};
\node[comm4v] (v21) at (1.46,-3) {};
\node[comm4v] (v22) at (0.8,-2) {};
\node[comm4v] (v23) at (-0.5,-4) {};
\node[comm4v] (v24) at (1.1,-4.14) {};

\node[comm1v] (v25) at (-4.2,0.5) {};

\draw (v1) -- (v2) -- (v3) -- (v4) -- (v1) -- (v5) -- (v3);
\draw (v5) --(v25)--(v2) -- (v6) -- (v5);

\draw (v7) -- (v8) -- (v9) -- (v10) -- (v7) -- (v11) -- (v9);
\draw (v8) -- (v12);

\draw (v13) -- (v14) -- (v15) -- (v16) -- (v13) -- (v17) -- (v15) -- (v18);
\draw (v14) -- (v18) --(v16) -- (v17);

\draw (v19) -- (v20) -- (v21) -- (v22) -- (v19) -- (v23) -- (v21);
\draw (v22) -- (v20) -- (v24) -- (v21);

\draw [color=black, dashed] (v4) -- (v7);
\draw [color=black, dashed] (v2) -- (v17);
\draw [color=black, dashed] (v8) -- (v15);
\draw [color=black, dashed] (v13) -- (v22);
\draw [color=black, dashed] (v14) -- (v19);
\draw [color=black, dashed] (v5) -- (v23);
\draw [color=black, dashed] (v11) -- (v21);

\end{tikzpicture}

\caption{A connected graph with 4 communities. Dashed lines are inter-community edges. The presented colouring is $\frac{2}{3}$-happy.}\label{fig:communities}
\end{figure}
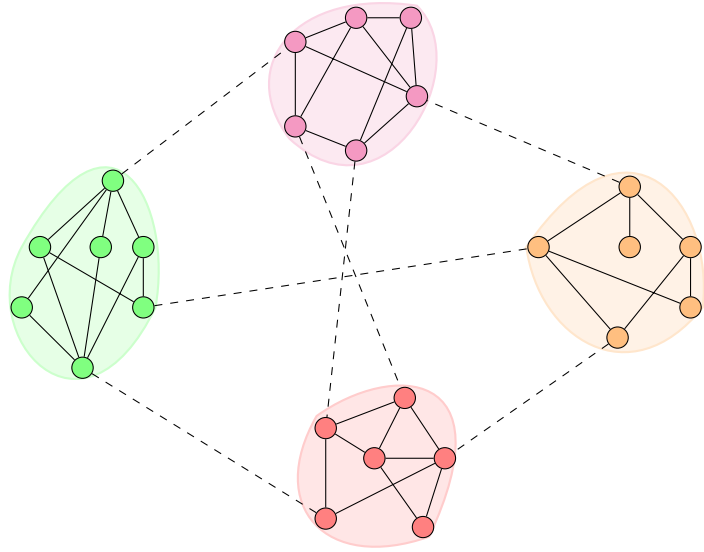

Although real-world graphs almost always have community structures~\cite{Cherifi2019, doi:10.1073/pnas.122653799, PhysRevE.68.065103}, for connected graphs, conventional happy colouring {\w methods} never fully achieve {\w a} partitioning {\w of} the vertex set into communities. One reason is that any colouring with more than one colour has some unhappy vertices in connected graphs. Lewis et al.~\cite{Lewis2019265} pointed out this fact and showed that the situation is worse in dense and large graphs because their proposed integer program finds very few happy vertices in these graphs\footnote{The findings of~\cite{SHEKARRIZ2025106893} and the results of this paper also affirm that there is almost no $1$-happy vertex in dense or large graphs.}. Therefore, relaxing the condition of ``all neighbours have the same colour as $\sigma(v)$'' should be helpful to get closer to the community structure. Zhang and Li~\cite{ZHANG2015117} in their original paper first introduced such relaxations, thereby generalising happy colouring to soft happy colouring, though soft happy colourings (and related variants) have been neglected for over a decade\footnote{It must also be noted that conventional happy colouring is a special case of soft happy colouring (when $\rho=1$).}.

Suppose that some vertices of a graph $G$ are precoloured using $k$ colours ($k\geq 2$), and $\sigma$ is a $k$-colouring extending the precolouring to all the vertices. Then a vertex $v$ is \emph{$\rho$-happy} if at least $\lceil \rho   \deg(v) \rceil$ of neighbours of $v$ have the same colour as $\sigma(v)$. A vertex colouring $\sigma$ is called a soft happy colouring (with $k$ colours) for $G$ if it has the maximum number of $\rho$-happy vertices among such $k$-colouring extensions of the precolouring of $G$. Shekarriz et al. showed that soft happy colouring is related to the community structure of graphs, and if $\rho$ is small enough, all the vertices in a dense graph with communities can be $\rho$-happy~\cite{SHEKARRIZ2025106893}. They also showed that this almost certainly happens as the number of vertices tends to infinity. Their argument is probabilistic and is based on the \emph{Stochastic Block Model} (SBM)~\cite{HOLLAND1983109}, a random graph model for graphs with community structure.

In this paper, we prove that for a graph $G$ in the SBM, a complete $\rho_2$-happy $k$-colouring has an accuracy of community detection higher than a complete $\rho_1$-happy $k$-colouring, provided that $0\leq\rho_1<\rho_2\leq 1$ and the colour classes are of almost the same size. Moreover, we show that if $\rho$ is bounded below by a threshold $\tilde{\xi}$ and {\w if} the SBM graph $G$ has ample of vertices, then the probability of finding a complete $\rho$-happy colouring is 0, while the probability of $\rho$-happiness of an arbitrary vertex approaches 0 when the number of vertices goes to infinity. These results are tested for a large set of randomly generated graphs in the SBM.

Zhang and Li~\cite{ZHANG2015117} introduced two heuristic algorithms to find ``near'' soft happy colouring of a graph, namely {\sf Greedy-SoftMHV} and {\sf Growth-SoftMHV}. Shekarriz et al.~\cite{SHEKARRIZ2025106893} introduced two more heuristics, namely {\sf LMC} and {\sf NGC}. To our knowledge, these are all the {\w currently} known algorithms to tackle soft happy colouring problems.  This paper presents three {\w new} local search algorithms, namely {\sf Local Search} ({\sf LS}), {\sf Repeated Local Search} ({\sf RLS}) and {\sf Enhanced Local Search} ({\sf ELS}) to be used as heuristics or to improve existing solutions. We tested these algorithms for a large set of randomly generated graphs in the SBM to compare their running results with the four previously known algorithms.

The paper is organised as follows: in Section~\ref{sec:th-back}, we present the theoretical background. This includes notations from graph theory, the SBM, and soft happy colouring for graphs in the SBM. In Section~\ref{sec:theory}, we present and prove our theoretical results. Known heuristics are visited in Section~\ref{sec:known} and our local search improvement algorithms are introduced in Section~\ref{sec:LS}. The experimental analysis, validating our theoretical results and comparing the performance of the algorithms, is presented in Section~\ref{sec:analys}. Section~\ref{sec:conc} concludes the paper. 

\section{Theoretical Background}\label{sec:th-back}

Throughout the paper, we use notation and definitions from graph theory.
{\w By a graph $G$, we mean a finite simple graph. It consists of a vertex set $V(G)$ and of a set of edges $E(G)$, whose elements are unordered pairs of distint vertices.} We write $G=(V(G), E(G))$. The numbers of vertices and edges are usually denoted by $n$ and $m$. The set of neighbours of a vertex $v$ is denoted by $N(v)$. We write $u\sim v$ when two vertices $u$ and $v$ are adjacent, and  $u\not\sim v$ when they are not.

This study is concerned with random graphs~\cite{Bollobas_2001}, more specifically, graphs in the Stochastic Block Model (SBM)~\cite{HOLLAND1983109,JERRUM1998155}. The model we use is the simplified version of the SBM, denoted by $\mathcal{G}(n,k,p,q)$, which is the probability space consisting of all graphs on $n$ vertices, with an assignment to $k$ vertex-disjoint communities. The probability of having an edge between two vertices of the same community is $p$, while two vertices of different communities are adjacent with the probability of $q$. We always suppose that $q<p$ to have a meaningful community structure. Moreover, to ease understanding of theoretical results and validate practical ones, we assume that each community is almost equal in size to $\frac{n}{k}$. For more information about the SBM, see~\cite{Lee2019}.

For $k\geq 2$, the problem of soft happy colouring is to extend a partial $k$-colouring to a complete vertex $k$-colouring so that the number of $\rho$-happy vertices is maximum among all such extensions of the partial colouring. The uncoloured vertices in the original partial colouring are called \emph{free vertices}. The number of $\rho$-happy vertices of a colouring $\sigma$ of the graph $G$ is denoted by $H_\rho (G,\sigma)$ or simply by $H_\rho (\sigma)$. The \emph{ratio of $\rho$-happy vertices} in a colouring $\sigma$ is $\alpha(\sigma)=\frac{H_\rho (\sigma)}{n}$. If $\sigma$ is a complete $\rho$-happy colouring, i.e. $\alpha(\sigma)=1$, we write $\sigma\in H_\rho$, and if $G$'s communities induce a $\rho$-happy colouring, we write $G\in H_\rho$. 

Note that in our soft happy colouring problem instances, we always have some precoloured vertices; at least one vertex from each community is assigned the colour of its community number. Therefore, the number of vertices that are assigned colours in accordance with their original community is nonzero. \emph{Accuracy of community detection} of a colouring $\sigma$ is the ratio of vertices whose colours agree with their original communities, and it is denoted by $ACD(\sigma)$.

Shekarriz et al in~\cite{SHEKARRIZ2025106893} investigated soft happy colouring for graphs in the SBM. Assuming that $G\in\mathcal{G}(n,k,p,q)$, $n$ is large enough, $k \ge 2$, $0<q<p<1$, $0<\rho\leq 1$, and $0<\varepsilon<1$, they proved that if
    \begin{equation}\label{th_eq}
        q(k-1)(e^\rho -1) +p(e^\rho-e)<\frac{k}{n}\mathrm{ln}(\varepsilon),
    \end{equation}
then the communities of $G$ induce a $\rho$-happy colouring on $G$ with probability at least $(1-\varepsilon)^n$. They defined a threshold
    \begin{equation}\label{xi}
        \xi=\max\left\{\min\left\{\ln\left(\frac{\frac{k}{n}\ln(\varepsilon)+p e +(k-1)q}{p+(k-1)q}\right),\; \frac{p}{p+(k-1)q}\right\}, \; 0\right\},
    \end{equation}
such that for $\rho\leq\xi$, with high probability, the communities of $G$ induce a complete $\rho$-happy colouring.

It was also asserted in~\cite{SHEKARRIZ2025106893} that for $0\leq \rho <\lim\limits_{n\rightarrow\infty} \xi$ and $G\in\mathcal{G}(n,k,p,q)$, when $n\rightarrow \infty$, we have $\Prob(G\in H_\rho)\rightarrow 1$. Here, we point it out that $\lim\limits_{n\rightarrow\infty} \xi=\frac{p}{p+(k-1)q}$. The reason behind this fact is that we have $$\lim_{n\rightarrow\infty} \ln\left(\frac{\frac{k}{n}\ln(\varepsilon)+p e +(k-1)q}{p+(k-1)q}\right)= \ln\left( \frac{pe+(k-1)q}{p+(k-1)q}\right).$$ Let $\tilde{\xi}=\frac{p}{p+(k-1)q}$. Then $0<\tilde{\xi}<1$, and since 
\begin{equation}\label{eq:simplified}
    \frac{p e + (k-1)q}{p + (k-1)q} = 1 + \frac{p(e-1)}{p + (k-1)q}=1+(e-1)\tilde{\xi},
\end{equation}
we have 
\begin{equation*}
    \ln\left( \frac{pe+(k-1)q}{p+(k-1)q}\right)= \ln(1+(e-1)\tilde{\xi}).
\end{equation*} Now, we have 
\begin{equation}
    \ln\left( \frac{pe+(k-1)q}{p+(k-1)q}\right)\geq \frac{p}{p+(k-1)q}
\end{equation}
because the function $f(x)=\ln(1+(e-1)x)-x$ is positive for $x\in (0,1)$. Thus, we can restate~\cite[Theorem 3.3]{SHEKARRIZ2025106893} as follows:

\begin{theorem}\label{th:infinity} \textnormal{\cite{SHEKARRIZ2025106893}}
Let $0<q<p<1$,  $k\in\mathbb{N}\setminus\{1\}$ be constants, $0\leq \rho <\tilde{\xi}=\frac{p}{p+(k-1)q}$, and $G\in\mathcal{G}(n,k,p,q)$. Then, $\Prob(G\in H_\rho) \to 1$ when $n\to \infty$. 
\end{theorem}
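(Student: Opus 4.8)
The plan is to reduce everything to the quantitative fact recorded around~\eqref{th_eq}: for $n$ large enough, whenever $\rho\in(0,1]$ and $\varepsilon\in(0,1)$ satisfy
\[
q(k-1)(e^{\rho}-1)+p(e^{\rho}-e)<\tfrac{k}{n}\ln(\varepsilon),
\]
the communities of $G$ induce a complete $\rho$-happy colouring with probability at least $(1-\varepsilon)^{n}$, so in particular $\Prob(G\in H_{\rho})\geq(1-\varepsilon)^{n}$. Since for $\rho=0$ every vertex is trivially $0$-happy (it needs $\lceil 0\cdot\deg(v)\rceil=0$ agreeing neighbours), that case is immediate, so assume henceforth $0<\rho<\tilde\xi$. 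The idea is then to feed this fact an $n$-dependent $\varepsilon=\varepsilon_{n}$ chosen so that the displayed inequality holds for every $n$ while simultaneously $(1-\varepsilon_{n})^{n}\to1$.

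First I would rewrite the left-hand side as a function of $\rho$ only,
\[
L(\rho):=q(k-1)(e^{\rho}-1)+p(e^{\rho}-e)=e^{\rho}\bigl(q(k-1)+p\bigr)-q(k-1)-pe,
\]
and note that $L(\rho)<0$ is equivalent to $\rho<\ln\!\bigl(\tfrac{q(k-1)+pe}{q(k-1)+p}\bigr)$, which by~\eqref{eq:simplified} equals $\ln\!\bigl(1+(e-1)\tilde\xi\bigr)$. Because $0<\tilde\xi<1$ (here $p,q>0$ and $k\geq2$ are used) and the function $f(x)=\ln(1+(e-1)x)-x$ is strictly positive on $(0,1)$ --- exactly the positivity fact the excerpt invokes to pass from $\xi$ to $\tilde\xi$ --- we get $\tilde\xi<\ln(1+(e-1)\tilde\xi)$, so the hypothesis $\rho<\tilde\xi$ already forces $L(\rho)<0$. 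Set $c:=-L(\rho)>0$; it depends only on the fixed data $p,q,k,\rho$.

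Next, take $\varepsilon_{n}:=e^{-cn/(2k)}\in(0,1)$. Then $\tfrac{k}{n}\ln(\varepsilon_{n})=-\tfrac{c}{2}>-c=L(\rho)$, so the displayed inequality holds for every $n$, whence $\Prob(G\in H_{\rho})\geq\bigl(1-e^{-cn/(2k)}\bigr)^{n}$ for all large $n$. To finish it suffices to see that this lower bound converges to $1$; taking logarithms and using $\ln(1-x)\geq -\tfrac{x}{1-x}$ for $x\in[0,1)$,
\[
n\ln\!\bigl(1-e^{-cn/(2k)}\bigr)\;\geq\;-\,\frac{n\,e^{-cn/(2k)}}{1-e^{-cn/(2k)}}\;\xrightarrow[n\to\infty]{}\;0,
\]
since exponential decay beats the linear factor $n$. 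Combined with the trivial bound $\Prob(G\in H_{\rho})\leq1$, this gives $\Prob(G\in H_{\rho})\to1$.

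The only genuinely delicate point --- and where some care is needed --- is the compatibility of the two requirements on $\varepsilon_{n}$: the displayed inequality forces $\varepsilon_{n}$ to stay above the exponentially small quantity $e^{-cn/k}$, whereas $(1-\varepsilon_{n})^{n}\to1$ needs $\varepsilon_{n}=o(1/n)$. These coexist only because the lower barrier is exponentially small while the upper bound is merely polynomial, leaving a wide window; any $\varepsilon_{n}$ with $e^{-cn/k}<\varepsilon_{n}=o(1/n)$ works, and $\varepsilon_n=e^{-cn/(2k)}$ above is one convenient choice. Everything else is bookkeeping --- in particular checking that the probability estimate $(1-\varepsilon)^{n}$ from~\eqref{th_eq} may legitimately be applied with an $n$-dependent $\varepsilon$, which it can, since that estimate holds for each fixed (large) $n$ and each admissible $\varepsilon\in(0,1)$.
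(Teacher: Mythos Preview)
Your proof is correct and uses the same ingredients the paper relies on: the quantitative estimate~\eqref{th_eq} together with the positivity of $f(x)=\ln(1+(e-1)x)-x$ on $(0,1)$. The paper, however, does not give a self-contained proof of this theorem; it simply restates \cite[Theorem~3.3]{SHEKARRIZ2025106893} after computing that $\lim_{n\to\infty}\xi=\tilde\xi$ (the preamble to the statement establishes $\ln(1+(e-1)\tilde\xi)\geq\tilde\xi$ and hence that the minimum in~\eqref{xi} eventually equals $\tilde\xi$), and then appeals to the cited result.

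Your argument is therefore more explicit: rather than passing to the limit of the threshold $\xi$, you fix $\rho<\tilde\xi$, observe directly that $L(\rho)<0$, and feed~\eqref{th_eq} an $n$-dependent $\varepsilon_n=e^{-cn/(2k)}$ chosen to sit in the exponential--polynomial window you describe. This buys a fully quantitative lower bound $\Prob(G\in H_\rho)\geq(1-e^{-cn/(2k)})^n$ with an explicit rate, whereas the paper's route is shorter but black-boxes the convergence into the cited theorem. Both are valid; yours is the more informative derivation.
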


As a consequence of this revised statement of the theorem, when $\frac{k}{n}$ is very small, {\w that is, when} the number of vertices is much larger than the number of communities, we can take $\tilde{\xi}$ instead of $\xi$ and {\w can} expect that the graph's communities induce a $\rho$-happy colouring when $\rho\leq \tilde{\xi}$.

\section{Community detection via soft happy colouring}\label{sec:theory}

For a graph $G\in\mathcal{G}(n,k,p,q)$ with the communities of $G$ are $C_1, \ldots, C_k$, suppose that $\sigma$ is a complete $k$-colouring with colour classes $A_1, \ldots, A_k$. Recall that the proportion of vertices, whose colours in $\sigma$ agree with their original community colouring, is denoted {\w by $ACD(\sigma)$}. Therefore, we have 
\begin{equation}\label{eq:acd}
    ACD(\sigma)=\frac{1}{n}\sum_{v\in V(G)} \delta_\sigma (v),
\end{equation}
where $$\delta_\sigma (v)=\left\{\begin{aligned}
    1 & \;\;\text{ if } v\in C_{\sigma(v)}\\
    0 & \;\;\text{ otherwise}.
\end{aligned}\right.$$

Suppose that $v\in C_i$ for $1\leq i\leq k$ and $\mathrm{DEG}$ is the random variable giving the vertex degrees. Thus 
\begin{equation}\label{eq:deg}
    \mathbb{E}[\mathrm{DEG}(v)]=\left(\frac{n}{k}-1\right)p+\frac{k-1}{k}nq\approx \frac{n}{k}\left(p+(k-1)q\right).
\end{equation}

Suppose also that $D_{in}^{\sigma}$ is the random variable that  {\w denotes} the number of neighbours {\w of $v$,} whose colours agree with $\sigma(v)$. In other words, $$D_{in}^{\sigma}(v)=\sum_{u\in A_{\sigma(v)}\setminus v} \delta(u,v),$$ where 
\begin{equation}\label{eq:delta}
\delta(u,v)=\left\{\begin{aligned}
        1 & \;\;\text{ if } u\sim v\\
        0 & \;\;\text{ otherwise.} 
    \end{aligned}\right.
\end{equation}
    Therefore,
    \begin{equation}\label{eq:din}
        \mathbb{E}[D_{in}^{\sigma}(v)]=\sum_{u\in A_{\sigma(v)}\setminus v} \mathrm{Pr}(u\sim v) \approx \frac{n}{k}\left(p   \pi_v^\sigma + q  (1-\pi_v^\sigma)\right),
    \end{equation}
    where $\pi_v^\sigma$ is the expected proportion of agreement of the colour class $\sigma(v)$ and the community containing $v$. In other words, 
   \begin{equation}\label{eq:pi}\pi_v^\sigma=\frac{ \left|A_{\sigma(v)}\cap C_i\right|}{ \left| C_i\right|}=\frac{k}{n}  \left|A_{\sigma(v)}\cap C_i\right|.
   \end{equation}
Therefore,
\begin{equation}\label{eq:acd2}
    ACD(\sigma)=\frac{1}{n}\sum_{v \in V(G)} \pi_v^\sigma 
\end{equation}

Under the condition of almost equal community sizes, it is likely to compare the accuracy of community detection of two $\rho$-happy colourings that preserve a precolouring.

\begin{theorem}\label{th:better}
    Let $G$ be a partially coloured graph in $\mathcal{G}(n,k,p,q)$ with at least one vertex from each community $i$ precoloured by the colour $i$ for $1\leq i\leq k$, $n$ be sufficiently large, $k>2$ be an integer significantly small compared to $n$, and $0\leq \rho_1 < \rho_2 \leq 1$. Moreover, suppose that $\sigma_1$ and $\sigma_2$ are extensions of the precolouring to complete $k$-colourings with each of their colour classes of almost the same size as $\frac{n}{k}$, and $\sigma_1\in H_{\rho_{1}}$, $\sigma_2\in H_{\rho_{2}}$ but $\sigma_1\notin H_{\rho_{2}}$.  Then, with high probability we have $$ACD(\sigma_1)\leq ACD(\sigma_2).$$ 
\end{theorem}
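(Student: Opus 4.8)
The plan is to work with the approximate identity $ACD(\sigma)=\frac{1}{n}\sum_{v}\pi_v^\sigma$ from~\eqref{eq:acd2}, and to show that $\rho$-happiness of a colouring with nearly balanced colour classes forces each $\pi_v^\sigma$ to lie above a threshold that is \emph{increasing in $\rho$}. Concretely, fix a vertex $v$ with $v\in C_i$ and $\sigma(v)=j$ (so $\pi_v^\sigma=\frac{k}{n}|A_j\cap C_i|$). The $\rho$-happiness of $v$ says $D_{in}^\sigma(v)\geq \lceil\rho\deg(v)\rceil$. Taking expectations via~\eqref{eq:deg} and~\eqref{eq:din}, in the regime $n\to\infty$, $k$ fixed, a happy vertex must satisfy (up to vanishing error, and with high probability after a concentration argument)
\begin{equation*}
p\,\pi_v^\sigma+q(1-\pi_v^\sigma)\;\gtrsim\;\rho\bigl(p+(k-1)q\bigr),
\end{equation*}
which rearranges to $\pi_v^\sigma\gtrsim \dfrac{\rho(p+(k-1)q)-q}{p-q}=:g(\rho)$. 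Since $p>q$, the function $g$ is strictly increasing in $\rho$, so $g(\rho_1)<g(\rho_2)$.

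First I would make this heuristic rigorous by a concentration/union-bound step: $D_{in}^\sigma(v)$ is a sum of independent indicators (edges from $v$ into $A_{\sigma(v)}$), and $\deg(v)$ is likewise a sum of independent indicators, both with means of order $n/k$; a Chernoff bound plus a union bound over the $n$ vertices shows that with high probability \emph{every} vertex simultaneously has $D_{in}^\sigma(v)$ and $\deg(v)$ within a $(1\pm o(1))$ factor of their means. On this high-probability event, happiness of $\sigma_\ell$ at $v$ yields $\pi_v^{\sigma_\ell}\geq g(\rho_\ell)-o(1)$ for every $v$, hence $ACD(\sigma_\ell)\geq g(\rho_\ell)-o(1)$ by~\eqref{eq:acd2}. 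That already gives a clean lower bound on $ACD(\sigma_2)$, but to compare with $ACD(\sigma_1)$ I also need a matching \emph{upper} bound on $ACD(\sigma_1)$, and here the balancedness hypothesis and $\sigma_1\notin H_{\rho_2}$ enter: because each colour class and each community has size $\approx n/k$, the numbers $\pi_v^{\sigma_1}$ are tightly constrained — for a fixed colour class $A_j$, $\sum_{i}|A_j\cap C_i|=|A_j|\approx n/k$, so the average over $v$ of $\pi_v^{\sigma_1}$ cannot exceed roughly $g(\rho_1)$ by much unless many vertices are ``super-happy''; and the fact that $\sigma_1\notin H_{\rho_2}$ (some vertex fails $\rho_2$-happiness) together with the near-equal class sizes should pin $ACD(\sigma_1)$ at essentially $g(\rho_1)+o(1)$, strictly below $g(\rho_2)$.

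The main obstacle — and the step I would spend the most care on — is precisely the upper bound on $ACD(\sigma_1)$: the happiness inequality only gives a \emph{lower} bound $\pi_v^{\sigma_1}\geq g(\rho_1)$ pointwise, and a priori some vertices of $\sigma_1$ could have $\pi_v^{\sigma_1}$ close to $1$, inflating $ACD(\sigma_1)$ past $g(\rho_2)$. To close this gap I would argue by a counting/averaging identity on the bipartite ``overlap matrix'' $M_{ij}=|A_j\cap C_i|$: its row sums and column sums are all $\approx n/k$ (balanced classes and balanced communities), so $\frac1n\sum_v\pi_v^{\sigma_1}=\frac{k}{n^2}\sum_{i,j}M_{ij}^2$ is maximised, subject to these margin constraints, exactly when $\sigma_1$ is a permutation of the communities, i.e. when $ACD(\sigma_1)\to1$; but a permutation colouring is $\rho$-happy for \emph{every} $\rho\le 1$ (each vertex has its whole community as same-colour neighbours), contradicting $\sigma_1\notin H_{\rho_2}$. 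More quantitatively, the failure of $\rho_2$-happiness at even one vertex, propagated through the balancedness, should force the overlap matrix to be bounded away from a permutation, and a Lagrange/convexity estimate on $\sum M_{ij}^2$ then yields $ACD(\sigma_1)\le g(\rho_1)+o(1)$. Combining the upper bound on $ACD(\sigma_1)$ with the lower bound $ACD(\sigma_2)\ge g(\rho_2)-o(1)$ and $g(\rho_1)<g(\rho_2)$ completes the proof. I would also keep an eye on the $k>2$ hypothesis, which is presumably needed so that the gap $g(\rho_2)-g(\rho_1)$ dominates the $o(1)$ error terms uniformly.
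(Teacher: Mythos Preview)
Your core derivation --- take expectations of the happiness inequality $D_{in}^\sigma(v)\ge\rho\deg(v)$, substitute \eqref{eq:deg} and \eqref{eq:din}, and solve for $\pi_v^\sigma\ge g(\rho):=\dfrac{\rho(p+(k-1)q)-q}{p-q}$ --- is exactly what the paper does, line for line. The paper then simply states that ``greater $\rho$ results in greater $\pi_v^\sigma$'', infers $\pi_v^{\sigma_1}\le\pi_v^{\sigma_2}$ ``with high probability'', and concludes via \eqref{eq:acd2}. That is the entire published proof: there is no separate concentration step, no overlap-matrix analysis, and no explicit upper bound on $ACD(\sigma_1)$.

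The ``main obstacle'' you flag --- that a pointwise lower bound $\pi_v^{\sigma_1}\ge g(\rho_1)$ does not by itself yield an upper bound on $ACD(\sigma_1)$ --- is real, and the paper does not address it; the published argument is heuristic at precisely that step. Your attempted fix via the overlap matrix does not close the gap either. The hypothesis $\sigma_1\notin H_{\rho_2}$ only says that \emph{one} vertex fails $\rho_2$-happiness, which is far too weak to force the overlap matrix away from a permutation or to pin $ACD(\sigma_1)$ near $g(\rho_1)$: concretely, $\sigma_1$ could agree with the true communities on all but $O(1)$ vertices --- giving $ACD(\sigma_1)\approx 1\gg g(\rho_1)$ --- while a single misplaced vertex is $\rho_2$-unhappy. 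Nor does convexity of $\sum_{i,j}M_{ij}^2$ help, since convexity bounds this sum from \emph{below}, not above. So you are being more careful than the paper, but the extra machinery you propose would not actually deliver the claimed upper bound; the theorem as written should be read as a heuristic comparison of the thresholds $g(\rho_1)<g(\rho_2)$, and your first paragraph already reproduces that argument in full.
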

\begin{proof}
If $\sigma$ is a $\rho$-happy colouring, for a vertex $v$ we have $$D_{in}^{\sigma}(v)\geq \rho\deg (v).$$ Therefore, we must have $$\mathbb{E}[D_{in}^{\sigma}(v)]\geq \rho   \mathbb{E}[\mathrm{DEG}(v)].$$ Now, by Equations~\ref{eq:din} and ~\ref{eq:deg} we have
\begin{equation*}
    \begin{split}
       & \frac{n}{k}\left(p   \pi_v^\sigma + q  (1-\pi_v^\sigma)\right) \geq \rho   \frac{n}{k}\left(p+(k-1)q\right) \\
        \implies & p   \pi_v^\sigma + q  (1-\pi_v^\sigma)  \geq  \rho  \left(p+(k-1)q\right)\\
        \implies & (p-q)\pi_v^\sigma \geq  \rho  \left(p+(k-1)q\right) -q\\
        \implies & \pi_v^\sigma \geq \frac{\rho  \left(p+(k-1)q\right) -q}{p-q}.
    \end{split}
\end{equation*}
Hence, greater $\rho$ results in greater $\pi_v^\sigma$. Now, since $\rho_1 <\rho_2$, it is inferred with high probability that $\pi_v^{\sigma_1} \leq \pi_v^{\sigma_2}$, and therefore by Equation~\ref{eq:acd2}, with high probability we have $$ACD(\sigma_1)\leq ACD(\sigma_2).$$
\end{proof}

 We know from Theorem~\ref{th:infinity} that for $G\in\mathcal{G}(n,k,p,q)$ that there is high probability that we can find a complete $\rho$-happy colouring for $G$ when $\rho\leq\tilde{\xi}$. Based on Theorem~\ref{th:better}, we expect to find higher accuracy of communities when $\rho$ is larger. An interesting question is how to find the best $\rho$ where a complete $\rho$-happy colouring has the highest degree of alignment 
 with the original communities of $G$.

In practice, a lower bound for $\rho$ is desirable so that a complete $\rho$-happy colouring acceptably matches the original community structure. One obvious lower bound is
\begin{equation}
    \rho\geq \mu = \frac{q}{p+(k-1)q}
\end{equation}
because when $\rho<\mu$, a vertex $v$ can still be $\rho$-happy if assigned to a colour other than its original community. 

Another extreme case for the proportion of happiness $\rho$ is when $\tilde{\xi}<\rho \leq 1$. Although based on Theorem~\ref{th:better}, we expect to have higher accuracy of community detection in this case, but there is a theoretical barrier that prevents this from happening. The following theorem states that for $\rho>\tilde{\xi}$, almost never an algorithm can find a complete $\rho$-happy colouring for a graph in the SBM with a large number of vertices. 

\begin{theorem}\label{th:xi-tilde}
Let $G \in \mathcal{G}(n,k,p,q)$ be a graph where $k \geq 2$ is fixed and considered small relative to $n$, and suppose that $0 < q \ll p < 1$. Assume further that the parameter $\rho$ satisfies $1 \geq \rho > \tilde{\xi} = \frac{p}{p+(k-1)q}$. Let $\sigma$ be a $k$-colouring of $G$ with colour classes $A_1, \ldots, A_k$ such that $|A_i| \approx \frac{n}{k}$ for each $i=1,\ldots,k$. Then, as $n \to \infty$, the probability that $\sigma$ contains any $\rho$-happy vertex converges to $0$. Moreover, for $n$ sufficiently large, $\sigma$ is almost never a complete $\rho$-happy colouring.
\end{theorem}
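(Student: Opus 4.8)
The plan is to show that, under the stated hypotheses, the probability that a fixed vertex $v$ is $\rho$-happy in the colouring $\sigma$ tends to $0$, and then to bootstrap this into the statement that $\sigma$ is almost never a complete $\rho$-happy colouring. The starting observation is exactly the computation already carried out in the proof of Theorem~\ref{th:better}: if $v\in C_i$ and $v$ is $\rho$-happy, then the number $D_{in}^{\sigma}(v)$ of same-colour neighbours is at least $\rho\deg(v)$, and a first-moment heuristic forces $\pi_v^\sigma \geq \frac{\rho(p+(k-1)q)-q}{p-q}$. First I would turn this heuristic into a genuine probabilistic statement via concentration: $\deg(v)$ and $D_{in}^{\sigma}(v)$ are each sums of independent indicator variables, so by a Chernoff/Hoeffding bound each is within $o(n)$ of its mean with probability $1-e^{-\Omega(n)}$. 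Hence, with probability $1-e^{-\Omega(n)}$, the event ``$v$ is $\rho$-happy'' implies $\pi_v^\sigma \geq \frac{\rho(p+(k-1)q)-q}{p-q} - o(1)$. Since $\rho > \tilde\xi = \frac{p}{p+(k-1)q}$, a short calculation gives $\frac{\rho(p+(k-1)q)-q}{p-q} > \frac{p-q}{p-q}=1$; more precisely this lower bound exceeds $1$ by a constant $c=c(p,q,k,\rho)>0$. But $\pi_v^\sigma \leq 1$ by definition (it is a ratio of a subset size to $|C_i|$), so for $n$ large the event forces $\pi_v^\sigma \geq 1+c-o(1) > 1$, a contradiction. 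Therefore $\Prob(v \text{ is } \rho\text{-happy}) \leq e^{-\Omega(n)}$.

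Next I would handle the two conclusions. For the first, a union bound over the $n$ vertices gives $\Prob(\sigma \text{ has some } \rho\text{-happy vertex}) \leq n\,e^{-\Omega(n)} \to 0$, which is the claimed convergence. For the second, ``$\sigma$ is a complete $\rho$-happy colouring'' is a strictly stronger event (it requires \emph{all} $n$ vertices to be $\rho$-happy), so its probability is also at most $n\,e^{-\Omega(n)}\to 0$; in particular for $n$ sufficiently large it is essentially zero, which is what ``almost never'' means here.

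One subtlety deserving care is that the colour classes $A_1,\dots,A_k$ of $\sigma$ may depend on the random graph $G$, whereas the concentration bounds above are cleanest when the sets are fixed in advance. The clean way around this is to note that the hypothesis pins down only the sizes $|A_i|\approx n/k$, and the key quantity $\pi_v^\sigma$ is determined by how $A_{\sigma(v)}$ intersects $C_i$; conditioning on the intersection sizes $|A_j\cap C_\ell|$ (of which there are only $k^2 = O(1)$), the edges of $G$ incident to $v$ are still independent with the SBM probabilities, so the Chernoff bound on $D_{in}^\sigma(v)$ applies conditionally and hence unconditionally after a union bound over the $O(\mathrm{poly}(n))$ possible profiles of intersection sizes. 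The main obstacle, then, is not any single estimate but the bookkeeping needed to make the first-moment argument of Theorem~\ref{th:better} rigorous uniformly over all admissible $\sigma$ — i.e.\ getting the concentration to hold simultaneously for every vertex and every realisation of the (possibly graph-dependent) colour classes, rather than just in expectation for one fixed $v$.
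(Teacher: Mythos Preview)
Your core argument is correct and takes a genuinely different, more transparent route than the paper. The paper first reduces to the community colouring $c$ (arguing that $\Prob(D_{in}^c(v)>\rho\,\mathrm{DEG}(v))\ge\Prob(D_{in}^\sigma(v)>\rho\,\mathrm{DEG}(v))$ for any balanced $\sigma$), and then bounds the former by an explicit moment-generating-function Chernoff estimate on $D_{in}(v)-\rho\,\mathrm{DEG}(v)$; most of its effort goes into showing the resulting exponent is positive once $\rho>\tilde\xi$, which is done via a Taylor expansion of $f(x,t)=\tfrac{1}{t}\ln(1+(e^t-1)x)-x$ near $t=0$. You bypass both the reduction and the MGF calculation: separate Chernoff bounds on $\deg(v)$ and $D_{in}^\sigma(v)$ already force $\pi_v^\sigma\ge\frac{\rho(p+(k-1)q)-q}{p-q}-o(1)$, and the threshold $\tilde\xi$ drops out immediately as the value of $\rho$ at which this lower bound equals $1$. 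This is shorter and makes the meaning of $\tilde\xi$ visible, at the cost of not producing an explicit exponential rate.

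Your last paragraph, however, does not work. A union bound over the $O(n^{k^2})$ intersection profiles is not enough: for a fixed profile $(|A_j\cap C_\ell|)_{j,\ell}$ there are still exponentially many concrete sets $A_{\sigma(v)}$, and $D_{in}^\sigma(v)$ depends on the actual set, not just its profile, so the Chernoff bound does not transfer. Worse, the statement you are trying to rescue is simply \emph{false} for graph-dependent $\sigma$: whenever $p+(k-1)q<1$ one has $\deg(v)<n/k$ with high probability, and then choosing $A_{\sigma(v)}\supseteq N(v)\cup\{v\}$ (padded to size $n/k$) makes $v$ $\rho$-happy for every $\rho\le 1$. So the subtlety you flag is real, but it is not yours to fix; the theorem, like the paper's own proof, must be read with $\sigma$ fixed independently of the random edges, and you should state that hypothesis rather than try to remove it.
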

\begin{proof}
    Let the SBM communities of $G$ be $C_1,\ldots,C_k$, and $v\in C_j$ be a vertex of $G$ for $1\le j\le k$. Since $0\leq \left|A_{\sigma(v)}\cap C_i\right|\leq \frac{n}{k}$, we can deduce from Equation~\ref{eq:pi} 
    that $0<\pi_v^\sigma \leq 1$. Moreover, {\w because} 
     $q\ll p$, we have
    \begin{equation}
    \mathbb{E}[D_{in}^\sigma (v)]\leq \frac{np}{k}.
    \end{equation}
    {\w Equality} holds if $\sigma$ perfectly aligns with the community structure of $G$, i.e. if $$\{A_1,\ldots,A_k\}=\{C_1,\ldots,C_k\}.$$ 
    Figure~\ref{fig:th-comm-xi} demonstrates this observation for a graph on 120 vertices and 6 communities: for every vertex in the green community, the number of its neighbours inside the green community is less than the number of its neighbours in the original community.

    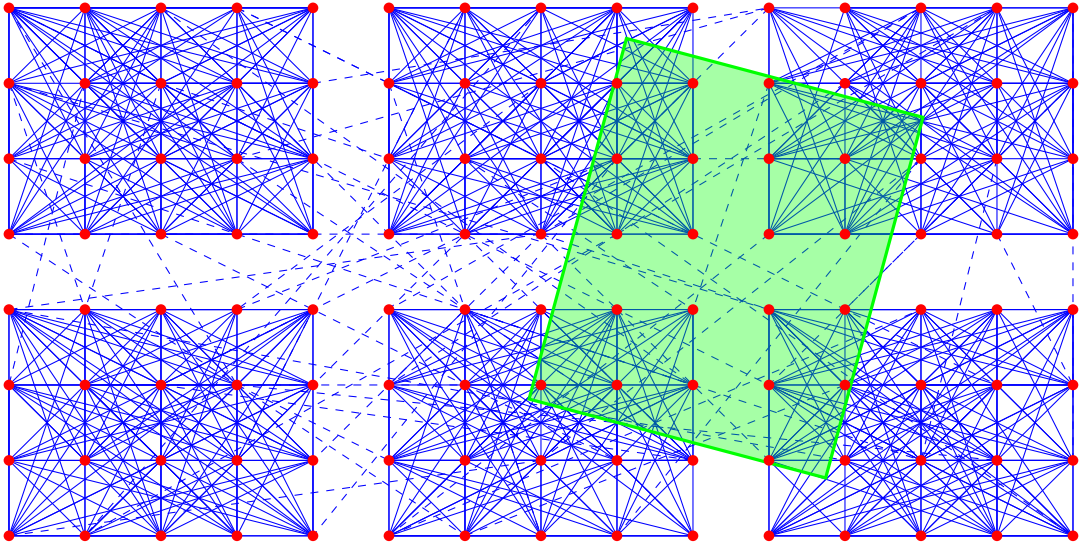
\begin{figure}
    \centering

\begin{tikzpicture}
  \def\ncommunities{6}      
  \def\nvertices{20}        
  \def\nrows{4}             
  \def\ncols{5}             
  \def\xsep{5}              
  \def\ysep{4}              
  \def\pIntra{0.7}          
  \def\pInter{0.01}         

  \foreach \i in {1,...,\ncommunities} {
      \pgfmathtruncatemacro{\crow}{int((\i-1)/3)}
      \pgfmathtruncatemacro{\ccol}{int(mod((\i-1),3))}
      \coordinate (com\i) at (\ccol*\xsep, -\crow*\ysep);
      
      \pgfmathsetmacro{\xoffset}{(\ncols-1)/2.0}
      \pgfmathsetmacro{\yoffset}{(\nrows-1)/2.0}
      
      \foreach \r in {0,...,\numexpr\nrows-1\relax} {
          \foreach \c in {0,...,\numexpr\ncols-1\relax} {
              \pgfmathtruncatemacro{\vid}{\r*\ncols + \c + 1}
              \coordinate (v\i-\vid) at ($(com\i) + (\c-\xoffset, -\r+\yoffset)$);
          }
      }
  }
  
  \begin{pgfonlayer}{background}
    \foreach \i in {1,...,\ncommunities} {
        \foreach \j in {1,...,\nvertices} {
            \ifnum\j < \nvertices
              \pgfmathtruncatemacro{\start}{\j+1}
              \foreach \k in {\start,...,\nvertices} {
                  \pgfmathsetmacro{\rando}{rnd}
                  \ifdim \rando pt < \pIntra pt
                      \draw[blue] (v\i-\j) -- (v\i-\k);
                  \fi
              }
            \fi
            \foreach \m in {1,...,\ncommunities} {
                \ifnum\m>\i
                    \foreach \l in {1,...,\nvertices} {
                        \pgfmathsetmacro{\rando}{rnd}
                        \ifdim \rando pt < \pInter pt
                            \draw[color=blue, dashed] (v\i-\j) -- (v\m-\l);
                        \fi
                    }
                \fi
            }
        }
    }
  \end{pgfonlayer}
  
  \begin{pgfonlayer}{foreground}
    \foreach \i in {1,...,\ncommunities} {
        \foreach \j in {1,...,\nvertices} {
            \fill[red] (v\i-\j) circle (2pt);
        }
    }
  \end{pgfonlayer}
  
  \filldraw[fill=green, fill opacity=0.35, draw=green, very thick, 
            rotate around={-15:(7.5,-1.5)}] (5.5,0.65) rectangle (9.54,-4.3);

\end{tikzpicture}

\caption{A graph on 120 vertices with 6 communities, each community consists of 20 vertices arranged in rectangular forms close to each other. A colour class, green rectangle, which is not aligned with the original community structure, is also shown. Evidently, the number of neighbours of a vertex in this colour class is less than the number of its neighbours in its original community. }\label{fig:th-comm-xi}
\end{figure}

Therefore, we observed that  $$\Prob(D_{in}^c (v)>\rho \mathrm{DEG}(v))\ge  \Prob(D_{in}^{\sigma} (v)>\rho \mathrm{DEG}(v)),$$ where $c$ is the colouring induced by communities.

To proceed, we need some notations from~\cite[proof of Theorem 3.1]{SHEKARRIZ2025106893}. For $i\in\{1,\ldots, k\}$, let $X_i$ be the random variable which gives the number of adjacent vertices to $v$ in $C_i$. In other words,
    \begin{equation}
        X_i=\sum_{u\in C_i} \delta(u,v),
    \end{equation}
    where $\delta(u,v)$ is introduced in Equation~\ref{eq:delta}. Therefore, 
    $$\mathbb{E}[X_i]=\begin{cases}
			p\frac{n}{k}& \text{ if } i=j \\
			q\frac{n}{k}& \text{ otherwise.}
		\end{cases}$$
    For an arbitrary $t>0$, the moment generating function of $X_i$ is $$\mathbb{E}\left[ e^{t  X_i}\right] =\begin{cases}
			\left( 1-p+p  e^t\right)^\frac{n}{k}&\text{ if } i=j \\
			\left( 1-q+q  e^t\right)^\frac{n}{k}&\text{ otherwise.}
		\end{cases}$$
    Now, we have $\mathrm{DEG}(v)=\sum_{i=1}^k X_i$ and $D_{in} (v)=D_{in}^c (v) =X_{j}$.
    
    To prove the assertion, we need to show that $\Prob(D_{in}(v)>\rho \mathrm{DEG}(v))$ approaches zero when $n\rightarrow\infty$. Suppose that $\rho=\tilde{\xi}+\varepsilon$ for $0<\varepsilon\le 1-\tilde{\xi}$. Then, by the Chernoff bound~\cite{lin2011probability}, for $t>0$ we {\w infer} 
    \begin{equation}\label{ineq:chernoff}
    \Prob(D_{in}(v)-\rho\mathrm{DEG}(v)>0)\leq \mathbb{E}\left[e^{t(D_{in}(v)-\rho \mathrm{DEG}(v))}\right].
    \end{equation}
    {\w Following an approach that is} similar to {\w that in}~\cite[proof of Theorem 3.1]{SHEKARRIZ2025106893}, {\w this result enables us to} find a bound on $\Prob(D_{in}(v)-\rho\mathrm{DEG}(v)>0)$. Using the moment generating function, we have
    $$\Prob(D_{in}(v)-\rho\mathrm{DEG}(v)>0)\leq  \left( 1-p+p  e^t\right)^{\frac{n}{k}}  \left( 1-p+p  e^{t\rho}\right)^{-\frac{n}{k}}   \prod_{i=1}^{k-1}\left( 1-q+q  e^{t\rho}\right)^{-\frac{n}{k}},$$
		Therefore,  \[\begin{split}
			\Prob(D_{in}(v)-\rho\mathrm{DEG}(v)>0)&\leq     \left( 1-p+p e^t\right)^{\frac{n}{k}}  \left( 1-p+p  e^{t\rho}\right)^{-\frac{n}{k}} \left( 1-q+q  e^{t\rho}\right)^{-\frac{k-1}{k}n}\\
			&\leq e^{-\frac{n}{k} \left( \mathrm{ln}\left(1-p+p  e^{t\rho}\right)+(k-1) \mathrm{ln}\left(1-q+q  e^{t\rho}\right)-\mathrm{ln}\left(1-p+p  e^{t}\right)\right)}.
		\end{split}\]
    {\w By} $e^x\geq 1+x$ for all $x\in\mathbb{R}$, we have $1-p+p e^{t\rho}\leq e^{p \left(e^{t\rho}-1\right)}$. Consequently, 
    
    \begin{equation}\label{ineq:pr-bound}
        \Prob(D_{in}(v)-\rho\mathrm{DEG}(v)>0)\leq e^{-\frac{n}{k}\left( p \left(e^{t\rho}-e^t\right)+(k-1)q \left(e^{t\rho}-1\right)\right)}.
    \end{equation}
    Because the right-hand side of the above inequality approaches 0 if $n\rightarrow \infty$ and
    \begin{equation}\label{ineq:th-xi}
        p  \left(e^{t\rho}-e^t\right)+(k-1)q \left(e^{t\rho}-1\right)>0,
    \end{equation}
    we only {\w have} to prove (\ref{ineq:th-xi}) for $\rho=\tilde{\xi}+\varepsilon$. This inequality is satisfied if and only if
    \begin{equation*}
        e^{tp}>\frac{pe^t +(k-1)q}{p+(k-1)q}.
    \end{equation*}
    In other words, Inequality~\ref{ineq:th-xi} is satisfied if and only if
    \begin{equation*}
        \rho> \frac{1}{t}\ln\left(\frac{pe^t +(k-1)q}{p+(k-1)q}\right),
    \end{equation*}
    which means that Inequality~\ref{ineq:th-xi} is satisfied if and only if
    \begin{equation*}
        \varepsilon> \frac{1}{t}\ln\left(\frac{pe^t +(k-1)q}{p+(k-1)q}\right)-\tilde{\xi}.
    \end{equation*}
    {\w Because } $\tilde{\xi}=\frac{p}{p+(k-1)q}$, similar to Equation~\ref{eq:simplified}, we have $\frac{pe^t +(k-1)q}{p+(k-1)q}=1+(e^t -1)\tilde{\xi}$. Therefore, 
    \begin{equation}\label{ineq:th-eps}
        \varepsilon> \frac{1}{t}\ln\left(1+(e^t -1)\tilde{\xi}\right)-\tilde{\xi},
    \end{equation} 
    {\w entails (\ref{ineq:th-xi}).} It remains to prove that for $\tilde{\xi}\in [0,1]$ and $0<\varepsilon\leq 1-\tilde{\xi}$, there is a $t>0$ such that Inequality~\ref{ineq:th-eps} holds. 
    
    Suppose that $$f(x,t)=\frac{1}{t}\ln\left(1+(e^t -1)x\right)-x.$$ We show for each $x\in [0,1]$ that $\lim\limits_{t\to 0^+} f(x,t)=0$, proving Inequality~\ref{ineq:th-eps}. We expand $f$ in a Taylor series. From the expansions
	\[
	1+(e^t-1)x = 1 + xt + \frac{x\,t^2}{2} + R_1 (t^3, x),\;\text{ and }\;
	\ln(1+y)= y - \frac{y^2}{2}+R_2 (y^3),\] 
    and $y=xt+\frac{x\,t^2}{2}+R_1 (t^3, x)$, we {\w conclude}
    \[
	f(x,t)= \frac{1}{t}\ln\Bigl(1+(e^t-1)x\Bigr)-x = \frac{(x-x^2)t}{2} + R(t^2,x),
	\]
	where $R_1$, $R_2$, and $R$ are Taylor {\w residuals} that tend to 0 when $t\to 0^+$. Thus, for every  $x\in[0,1]$, 
	\[
	\lim_{t\to 0^+} f(x,t)=0, \]
	which proves (\ref{ineq:th-eps}), and concludes that the probability of $v$ being $\rho$-happy by $c$ (and therefore by $\sigma$) approaches 0 when $n\to\infty$.

    The independence of the events of {\w an} edge between {\w a} 
    pair of vertices {\w and between any other pair of vertices, leads to}  the  inequality 
    $$\Prob(c\in H_\rho)=\left(\Prob(D_{in}(v)-\rho\mathrm{DEG}(v)>0)\right)^n \le  e^{-\frac{n^2}{k}\left( p \left(e^{t\rho}-e^t\right)+(k-1)q \left(e^{t\rho}-1\right)\right)}.$$ 
    Therefore, $\Prob(c\in H_\rho)$ should be very small (and $\Prob(\sigma\in H_\rho)$ should be even smaller) if $n$ is sufficiently large because of Inequality~\ref{ineq:th-xi}. {\w This completes the proof.} 
\end{proof}

Theorems~\ref{th:better} and \ref{th:xi-tilde} have been verified in our experimental tests; see Section~\ref{sec:verify}.

\section{Known algorithms for soft happy colouring}\label{sec:known}
Four heuristic algorithms have already been proposed for soft happy colouring of a graph with some precoloured vertices. In 2015, Zhang and Li~\cite{ZHANG2015117} proposed {\sf Greedy-SoftMHV} and {\sf Growth-SoftMHV}. In this paper, because there is no confusion, we denote these two algorithms simply by {\sf Greedy} and {\sf Growth}. The other two heuristic algorithms are {\sf LMC} and {\sf NGC}, which were introduced by Shekarriz et al.~\cite{SHEKARRIZ2025106893}.

In {\sf Greedy}, a partially coloured graph $G$ is taken as input, and tries colouring all the uncoloured vertices with one colour to see which colour generates the most $\rho$-happy vertices, which is reported as the algorithm's output. The time complexity of {\sf Greedy} is $\mathcal{O}(km)$ where $m$ is the number of edges and $k$ is the number of permissible colours, see Carpentier et al.~\cite{Carpentier2023}.

In {\sf Growth}, vertices are partitioned into {\w several} groups. {\w The} most notable ones are:
\begin{itemize}
    \item[$P$:] Already coloured vertices that can become $\rho$-happy, because there are enough uncoloured vertices adjacent to them.
    \item[$L_u$:] Vertices that are not coloured, but  will be $\rho$-unhappy, no matter {\w which} colour they receive.
    \item[$L_h$:] Vertices that are not coloured yet, but can become $\rho$-happy.
\end{itemize}
The main loop of {\sf Growth} continues as long as the set $L=P\cup L_u \cup L_h$ is not empty. Within this loop, if there {\w are $P$-verices}, it selects one, appends enough vertices from its neighbours to its colour class to make it $\rho$-happy, and recalculates the sets $P$, $L_u$, and $L_h$. If there is no $P$-vertex, an $L_h$-vertex will be chosen and {\w appended} to a colour class that makes it $\rho$-happy, {\w and then} the sets $P$, $L_u$, and $L_h$ {\w are updated}. If there is no $P$ or $L_h$ vertex, an $L_u$-vertex is chosen and coloured randomly, then  the sets $P$, $L_u$, and $L_h$ {\w are updated again}. The process repeats until all the sets $P$, $L_u$, and $L_h$ become empty, and the algorithm returns the complete colouring of the graph. The time complexity of {\sf Growth} is $\mathcal{O}(mn)$, see Carpentier et al.~\cite{Carpentier2023}.

Shekarriz et al.~\cite{SHEKARRIZ2025106893} introduced two more heuristic algorithms. One of them, namely {\sf Neighbour Greedy Colouring (NGC)}, is an extension of {\sf Greedy}. In its main loop, it tries available colours on uncoloured vertices to find which colour $i$ generates the {\w maximum}  number of $\rho$-happy vertices, but only colours neighbours of vertices that are already coloured by $i$. Tests show that {\sf NGC} and {\sf Greedy} almost always give  {\w similar solutions}, although {\sf NGC} usually {\w takes} much more time, as its time complexity is $\mathcal{O}(dkm)$, where $d=\mathrm{diam}(G)$.

{\sf Local Maximal Colouring (LMC)} is also introduced in~\cite{SHEKARRIZ2025106893} and shown to be very fast, as {\w its time complexity is} $\mathcal{O}(m)$. Moreover, its output is highly correlated with the graph's community structure. In the design of the {\sf LMC}, there is no dependence on the proportion of happiness $\rho$, but some randomness is employed. In its main loop, the algorithm randomly chooses a vertex $v$ from the intersection of uncoloured vertices and neighbours of already coloured vertices, and colours $v$ with {\w a colour that appears} the most in $N(v)$.

In order to have an idea {\w how the improvement} algorithms perform, we can look at their run results when given randomly generated feasible colourings as their inputs. Thus, we also consider {\sf Random}, {\w i.e.}  an algorithm that randomly assigns colours from $\{1,\ldots,k\}$ to the uncoloured vertices.

\section{Improvement algorithms}\label{sec:LS}

An improvement algorithm refines solutions to optimisation problems, typically by iteratively enhancing the quality of the solution. The goal is to arrive at a solution closer to optimal that satisfies a validity criterion. These algorithms are often applied when initial solutions are generated using heuristic, approximation, or other feasible methods, such as {\sf Random}. 

Inputs and outputs of our improvement algorithms for soft happy colouring are as follows:

    \begin{quote}
        \textbf{Input:} a connected graph $G$, a (probably partial) $k$-colouring $\sigma$, the proportion of happiness $\rho$, and the set of precoloured vertices $V'$. 
        
        \textbf{Output:} a $k$-colouring $\tilde{\sigma}$ such that $H_\rho (\tilde{\sigma})\geq H_\rho (\sigma)$.
    \end{quote}

We propose three local search improvement algorithms for soft happy colouring, namely {\sf Local Search} ({\sf LS} for short), {\sf Repeated Local Search} ({\sf RLS} for short) and {\sf Enhanced Local Search} ({\sf ELS} for short). All of these algorithms may come with the suffix ``{\sf -SoftMHV}'', but because there is no confusion here, we do not mention the suffix. The algorithms receive a (probably partial) $k$-colouring $\sigma$ and transform it into a $k$-colouring $\tilde{\sigma}$ so that the number of $\rho$-happy vertices in the latter is greater than or equal to the former.

The improvement algorithm \hyperref[alg:ls]{\sf Local Search (LS)} is presented in detail in Algorithm~\ref{alg:ls}. It improves the number of $\rho$-happiness as follows. First, in Line 1, the colouring $\tilde{\sigma}$ is initially the same as the input colouring $\sigma$. Then, in Line 2, all the $\rho$-unhappy vertices that are not precoloured are gathered in the set $U$. Afterwards, in the algorithm's main loop, Lines 3--8, all the vertices in $U$ are examined; for a vertex $u\in U$, if $u$ has at least one coloured neighbour, its colour becomes $q$, {\w a colour} which appears the most in $N(u)$. 

If there is a vertex $u$ in $U$ such that all {\w of} its neighbours are {\w uncoloured} in $\sigma$, then no colour $q$ (Line 5) appears the most in $N(u)$. So, at Line 7, it is possible that $U\neq \emptyset$. Because some vertices of $U$ are already coloured, the remaining vertices in $U$ can be checked repeatedly until $U$ becomes empty. Moreover, due to the assumption of $G$ being connected, the main loop ends.

It is also possible that while the algorithm makes a vertex $\rho$-happy, it makes some of its neighbours $\rho$-unhappy. Therefore, it is essential to check at the completion of the algorithm, Lines 9--11, whether it improved the $\rho$-happiness vertices. If it does not enhance the number of $\rho$-happy vertices, then the algorithm reports the input colouring, ensuring that we always have $$H_\rho (\tilde{\sigma})\geq H_\rho (\sigma).$$ 

\begin{algorithm}[!ht]
    \caption{(Local Search (LS)) Improving the solution of $\rho$-happy colouring.}\label{alg:ls}
   
    \begin{algorithmic}[1]

        \State $\tilde{\sigma}\gets \sigma$
       
        \State $U\gets \rho$-unhappy free vertices \Comment{$U=\{v\in V\setminus V'\;: \; v \text{ is }\rho\text{-unhappy in } c \}$}
        \NoNumber{ }
        \While{$U\neq \emptyset$}
        
        \For{$u\in U$}
        
        \State $\tilde{\sigma}(u)\gets$ the colour which appears the most in $N(u)$
        \State {\bf Remove} $u$ from $U$
       
        \EndFor
        
        \EndWhile
       
        \If{$H_\rho (\tilde{\sigma})<H_\rho (\sigma)$}
        \State $\tilde{\sigma} \gets \sigma$
        \EndIf
        \State {\bf Return} $\tilde{\sigma}$
    \end{algorithmic}
\end{algorithm}

\hyperref[alg:ls]{\sf LS} is a fast algorithm whose time complexity is linear in terms of the number of edges, $\mathcal{O}(m)$. The algorithm serves two purposes: (a) improves upon the solution provided by another algorithm, and (b) generates a heuristic solution if it is given a partial colouring as input. 

Another heuristic and improvement algorithm, namely \hyperref[alg:rls]{\sf Repeated Local Search (ELS)}, which is presented in Algorithm~\ref{alg:rls}, extends \hyperref[alg:ls]{\sf LS} by changing $U$ to the set of $\rho$-unhappy free vertices in Line~8. This increases the algorithm's complexity, so it is essential to check if the set $U$ actually changes, which is done in Lines~9 to~11, and if not, the main loop ends. The complexity of calculating $H_\rho (\tilde{\sigma})$ for all vertices of $G$ is $\mathcal{O}(m)$, computed at Line~8 of \hyperref[alg:rls]{\sf RLS}. Therefore, the time complexity of \hyperref[alg:rls]{\sf RLS} is $\mathcal{O}(m^2)$.

\begin{algorithm}
    \caption{(Repeated Local Search (RLS)) Improving the solution of $\rho$-happy colouring.}\label{alg:rls}
    \begin{algorithmic}[1]

        \State $\tilde{\sigma}\gets \sigma$
        
        \State $U\gets \rho$-unhappy free vertices \Comment{$U=\{v\in V\setminus V'\;: \; v \text{ is }\rho\text{-unhappy in } c \}$} 
        \NoNumber{ }
        \While{$U\neq \emptyset$}
       
        \For{$u\in U$}
       
        \State $\tilde{\sigma}(u)\gets$ the colour which appears the most in $N(u)$
        \State {\bf Remove} $u$ from $U$
       
        \EndFor
        \State $U\gets \rho$-unhappy free vertices 
        \If{$U$ is repeated to be the same set} 
        \State {\bf Break While}
        \EndIf
        \EndWhile
        
        \If{$H_\rho (\tilde{\sigma})<H_\rho (\sigma)$}
        \State $\tilde{\sigma} \gets \sigma$
        \EndIf
        \State {\bf Return} $\tilde{\sigma}$
    \end{algorithmic}
\end{algorithm}

The third proposed heuristic algorithm, \hyperref[alg:els]{\sf Enhanced Local Search (PLS)}, is a conventional optimisation local search algorithm that checks the objective function $H_\rho (\tilde{\sigma})$ at every step of its main loop.  It does not check the colour that appears the most in $N(u)$, instead, in Line~6, it checks the colours that are able to make $u$ a $\rho$-happy vertex and picks the one that generates the largest number of $\rho$-happy vertices in $G$. As said, the complexity of calculating $H_\rho (\tilde{\sigma})$ for all vertices of $G$ is $\mathcal{O}(m)$. Therefore, the time complexity of \hyperref[alg:els]{\sf ELS} is $\mathcal{O}(mnk)$.

\begin{algorithm}
    \caption{(Enhanced Local Search (ELS)) Improving the solution of $\rho$-happy colouring.}\label{alg:els}
    \begin{algorithmic}[1]
        
        \State $\tilde{\sigma}\gets \sigma$
       
        \State $U\gets \rho$-unhappy vertices \Comment{$U=\{v\in V\setminus V'\;: \; v \text{ is }\rho\text{-unhappy in } c \}$}
        \NoNumber{ }
       
        \For{$u\in U$}
        \State $c_t \gets \tilde{\sigma}$\Comment{$c_t$ is the best solution found so far}
        \For{$q\in \{1,\ldots,k\}$}
        \If{$\vert \{ v\in N(u) \; : \; \sigma(v)=q\} \vert \geq \rho   \deg (v)$}
        \State $c_t (u)\gets q$
        \If{$H_\rho (\tilde{\sigma})<H_\rho (c_t)$}
        \State $\tilde{\sigma}(u) \gets q$
        \State {\bf Remove} $u$ from $U$
        \EndIf
        \EndIf
        \EndFor
        \EndFor
       
        \State {\bf Return} $\tilde{\sigma}$
    \end{algorithmic}
\end{algorithm}

\section{Experimental evaluation}\label{sec:analys}
\subsection{Tests details}\label{sec:details}
To test the heuristic and improvement algorithms, we use the same set of 28,000 randomly generated partially coloured graphs in the SBM, introduced in~\cite{SHEKARRIZ2025106893}. The graphs are stored in DIMACS format and are publicly available\footnote{at \href{https://github.com/mhshekarriz/HappyColouring_SBM}{https://github.com/mhshekarriz/HappyColouring\_SBM}}. The graphs consist of $200 \leq n < 3,000$ vertices. For each $n$, 10 instances are available, a total of 28,000 randomly generated partially coloured graphs. For each graph, the other parameters are randomly chosen within the intervals as follows: $k \in \{2,3,\ldots,20\}$, $p \in (0,1]$, $q \in (0, \frac{p}{2}]$, and $\rho \in (0,1]$. The time limit for our tests is set to 60 seconds.

The number of precoloured vertices per community for the 28,000 graphs is $1\leq pcc \leq 10$, which sometimes seems relatively low compared to the total number of vertices. Therefore, the number of free vertices of these instances can reach 2998 for a graph with 3000 vertices, two communities, and one precoloured vertex per community. To study the effects of higher $pcc$ requires a new set of generated graphs, which will be discussed in Section~\ref{sec:pcc}.

\subsection{Verification of theoretical results}\label{sec:verify}
To test out theoretical results, Theorem~\ref{th:better} and~\ref{th:xi-tilde}, we checked whether they hold for the heuristic algorithms (and the improvement algorithms when they are given the original partially coloured graphs as their inputs). First, we check how changes in $\rho$ affect the accuracy of community detection of an algorithm when it produces a complete $\rho$-happy colouring. To do this, for each tested graph, we split the interval $[0,1]$ into three subintervals: $[0,\mu)$, $[\mu, \tilde{\xi}]$, and $(\tilde{\xi}, 1]$. The result of this setting is reported in Figure~\ref{fig:th-comm1}. 

\begin{figure}
    \centering
    \includegraphics[scale=0.48]{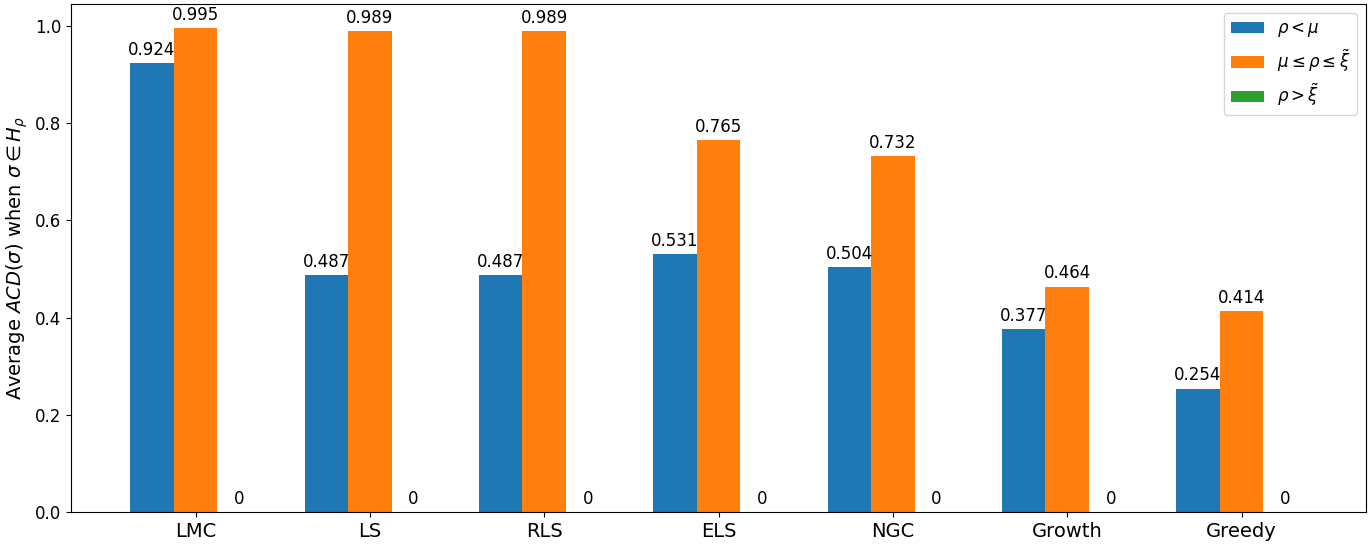}
    \caption{Average accuracy of community detection of algorithms (i.e. average $ACD(\sigma)$), when they produce a complete $\rho$-happy colouring. When $\tilde{\xi}<\rho\leq 1$, no algorithm could find a complete $\rho$-happy colouring; therefore, the respective value for each algorithm is 0.}
    \label{fig:th-comm1}
\end{figure}

As can be seen in Figure~\ref{fig:th-comm1}, the assertion of Theorem~\ref{th:better} holds for all the heuristic algorithms. The accuracies of community detection of {\sf LMC}, \hyperref[alg:ls]{\sf LS}, and \hyperref[alg:rls]{\sf RLS} are near-complete when their output is a complete $\rho$-happy colouring and $\rho\in [\mu,\tilde{\xi}]$. Further details are provided in Table~\ref{table:theory}. The table shows the number of times each algorithm could (or could not) find a complete $\rho$-happy colouring. Moreover, both Figure~\ref{fig:th-comm1} and Table~\ref{table:theory} verify Theorem~\ref{th:xi-tilde} as no algorithm can find a complete happy colouring when $\tilde{\xi}<\rho\leq 1$.

It must also be noted that although in our tests the behaviours of {\sf Greedy} and {\sf NGC} showed to be in accordance with Theorem~\ref{th:better}, it is possible that in another test they behave differently because the colour classes of their outputs seldom have almost the same size.  

\begin{table}
\centering
\begin{NiceTabular}{|c|cc|cc|cc|}[hvlines, code-before =
\rectanglecolor{vibtilg!50}{1-1}{1-7}
\rectanglecolor{des!50}{2-1}{2-7}
\rectanglecolor{C_LMC!40}{3-1}{3-7}
\rectanglecolor{C_LS!50}{4-1}{4-7}
\rectanglecolor{C_ELS!35}{5-1}{5-7}
\rectanglecolor{C_PLS!50}{6-1}{6-7}
\rectanglecolor{C_NGC!40}{7-1}{7-7}
\rectanglecolor{C_growth!40}{8-1}{8-7}
\rectanglecolor{C_greedy!50}{9-1}{9-7}
\rectanglecolor{des!50}{10-1}{11-7}
]
\hline
{\bf Condition} & \multicolumn{2}{c|}{$\boldsymbol{0\leq \rho<\mu}$} & \multicolumn{2}{c|}{$\boldsymbol{\mu\leq \rho\leq\tilde{\xi}}$} & \multicolumn{2}{c|}{$\boldsymbol{\tilde{\xi}< \rho\leq 1}$} \\
\hline
Description  & \# $c\in H_\rho$  & \# $c\notin H_\rho$  & \# $c\in H_\rho$  & \# $c\notin H_\rho$  & \# $c\in H_\rho$    & \# $c\notin H_\rho$   \\
\hline
{\sf LMC}  & 635  & 1717  & 1241  & 7166  & 0  & 17241   \\
\hline
{\sf LS} & 59  & 2293  & 63  & 8344  & 0  & 17241  \\
\hline
{\sf RLS}  & 59  & 2293  & 129  & 8278  & 0  & 17241   \\
\hline
{\sf ELS} & 38  & 2314  & 21  & 8386  & 0  & 17241  \\
\hline
{\sf NGC}  & 516  & 1836  & 78  & 8329  & 0  & 17241   \\
\hline
{\sf Growth} & 837  & 1515  & 407  & 8000  & 0  & 17241   \\
\hline
{\sf Greedy}  & 68  & 2284  & 3  & 8404  & 0  & 17241   \\
\hline
\Block{2-1}{Totals} &  \multicolumn{2}{c|}{2352} &  \multicolumn{2}{c|}{8407} &  \multicolumn{2}{c|}{17241}\\
 &  \multicolumn{6}{c|}{28000}\\ 
 \hline
\end{NiceTabular}

\medskip
\caption{The number of times each algorithm finds a complete $\rho$-happy colouring (denoted by \# $c\in H_\rho$) versus the number of times each algorithm fails to find a complete $\rho$-happy colouring (denoted by \# $c\notin H_\rho$) when $\rho \in [0,\mu)$, $\rho \in [\mu, \tilde{\xi}]$, or $\rho \in (\tilde{\xi},1]$. When $\rho\in (\tilde{\xi},1]$, no algorithm could find a complete $\rho$-happy colouring, so the respective value is 0. The row sum for each row is 28,000, the number of tested graphs.}
\label{table:theory}
\end{table}

Figure~\ref{fig:th-comm2} illustrates how changes in $\rho$ affect the accuracies of community detection of the algorithms in general, i.e., when they do not necessarily produce a complete $\rho$-happy colouring. The figure shows average values of the accuracies considering the three subintervals.

\begin{figure}
    \centering
    \includegraphics[scale=0.48]{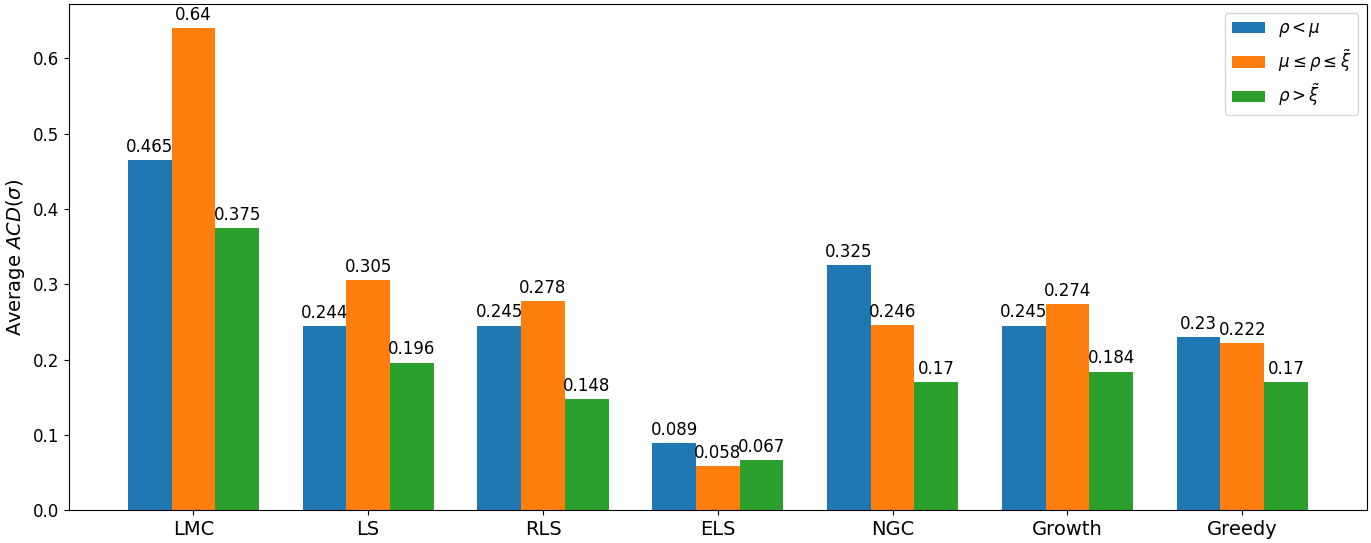}
    \caption{Average community detection accuracy (i.e. $ACD(\sigma)$) of the colouring outputs of the tested algorithms when completeness of $\rho$-happy colouring is relaxed.}
    \label{fig:th-comm2}
\end{figure}

It can be seen in Figure~\ref{fig:th-comm2} that for the algorithms {\sf LMC}, \hyperref[alg:ls]{\sf LS}, \hyperref[alg:rls]{\sf RLS}, and {\sf Growth}, higher accuracy of community detection can be seen when $\mu\leq \rho \leq \tilde{\xi}$. However, this is not the case for {\sf NGC} and {\sf Greedy} as these algorithms seldom produce equal-sized partitions. For \hyperref[alg:els]{\sf ELS}, the accuracy of community detection is not reliable due to very poor performance caused by its high time-complexity, see Section~\ref{sec:analys-heu}. 

\subsection{Analysis of heuristic algorithms}\label{sec:analys-heu}

On average, {\sf Greedy} finds the highest ratio of $\rho$-happy vertices among the known heuristics while {\sf NGC} follows closely behind (Figure~\ref{fig:bar-happy-heuristics}). Both these algorithms could find colourings with more than \%90 $\rho$-happy vertices. This average value is \%87 for \hyperref[alg:ls]{\sf LS}, \%71 for {\sf LMC}, \%68 for {\sf Growth}, and \%53 for \hyperref[alg:rls]{\sf RLS}. In general, \hyperref[alg:els]{\sf ELS} could not generate full colouring in the time limit of 60 seconds, so the quality of solutions generated by it is only \%2.

The above result is not surprising since the number of precoloured vertices in our tests was relatively low compared to the total number of vertices. However, if the number of precoloured vertices increases, we can expect the ratio of $\rho$-happy vertices for {\sf Greedy} and {\sf NGC} to drop significantly, while other algorithms perform even better. A study dedicated to such a scenario is presented in Section~\ref{sec:pcc}. Even with the problem instances consist of low numbers of precoloured vertices, \hyperref[alg:ls]{\sf LS} is able to outperform all other heuristics when $\rho<\xi$ (Figure~\ref{fig:bar-happy-heuristics-xi}). In this case, {\sf Greedy}, {\sf NGC} and {\sf LMC} follow \hyperref[alg:ls]{\sf LS}, all generating high quality solutions.

\begin{figure}
    \centering
    \includegraphics[scale=0.6]{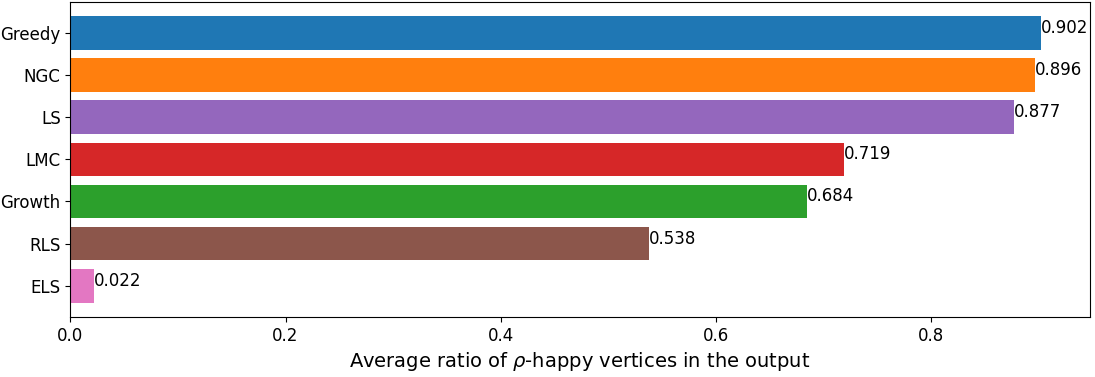}
    \caption{Comparing the average ratio of $\rho$-happy vertices in the output of the algorithms. Here, all the algorithms are given the raw pre-coloured graphs as inputs.}
    \label{fig:bar-happy-heuristics}
\end{figure}

\begin{figure}
    \centering
    \includegraphics[scale=0.6]{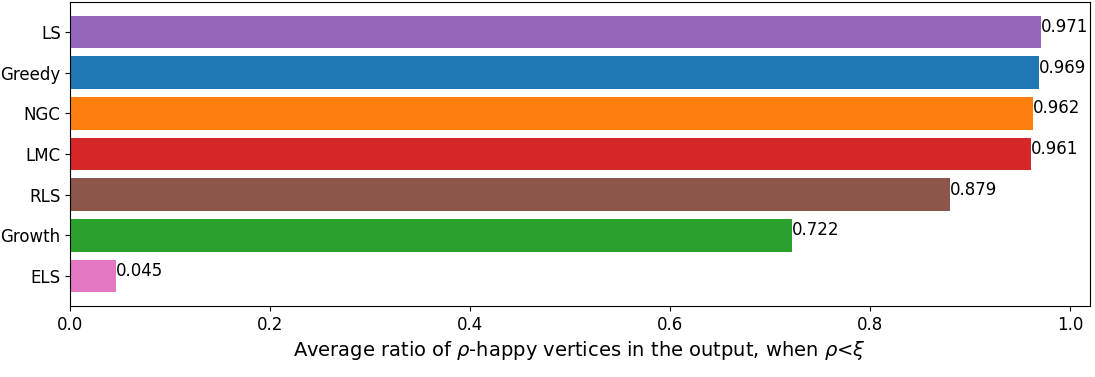}
    \caption{Comparing the average ratio of $\rho$-happy vertices in the output of the algorithms, when $\rho<\xi$. Here, all the algorithms are given the raw pre-coloured graphs as inputs.}
    \label{fig:bar-happy-heuristics-xi}
\end{figure}

Concerning the accuracy of community detection, {\sf LMC} performs best by far (Figure~\ref{fig:bar-comm-heuristics}). \hyperref[alg:ls]{\sf LS} follows next, but is only able to attain half the accuracy of {\sf LMC}.

\begin{figure}
    \centering
    \includegraphics[scale=0.6]{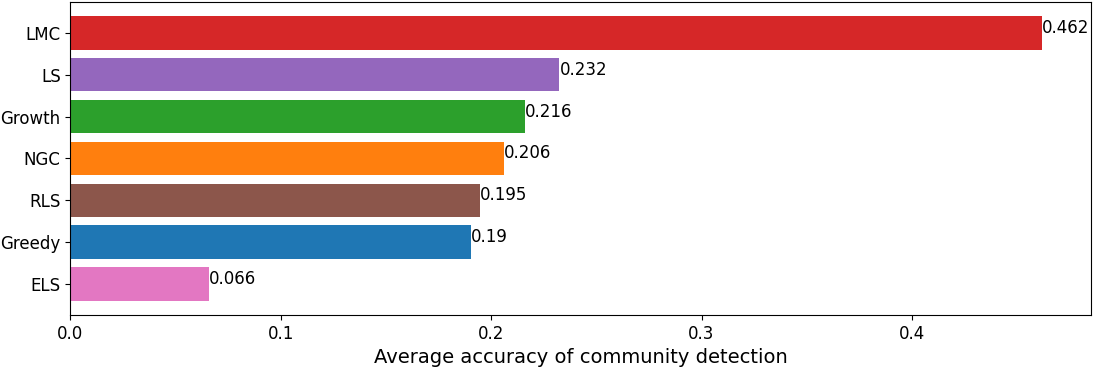}
    \caption{Comparing the average accuracy of community detection in the output of the algorithms. Here, all the algorithms are given the raw pre-coloured graphs as inputs.}
    \label{fig:bar-comm-heuristics}
\end{figure}

Concerning run-times, {\sf LMC}, \hyperref[alg:ls]{LS} and \hyperref[alg:rls]{\sf RLS} have the lowest average running time among the heuristics (Figure~\ref{fig:bar-time-heuristics}). {\sf ELS} requires the longest average run-times, almost twice as those of {\sf Growth}. 

\begin{figure}
    \centering
    \includegraphics[scale=0.6]{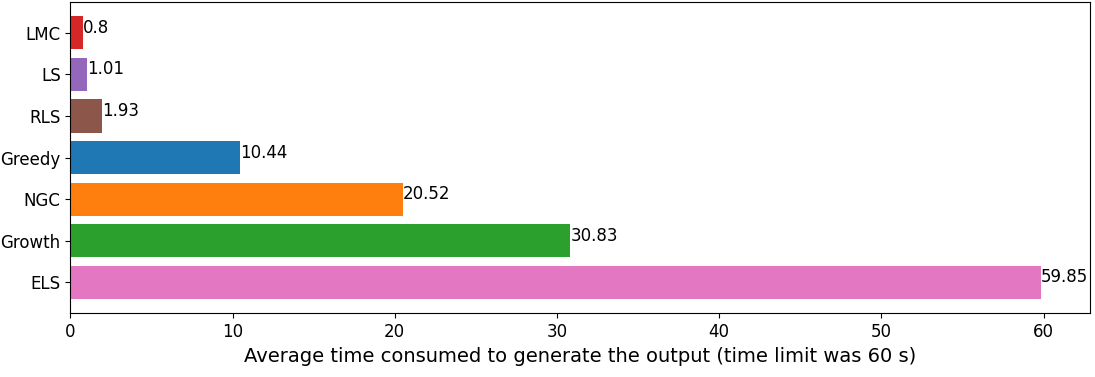}
    \caption{Comparing the average running time of the algorithms. Here, all the algorithms are given the raw pre-coloured graphs as inputs.}
    \label{fig:bar-time-heuristics}
\end{figure}

Figure~\ref{fig:heu-hap} compares the ratio of $\rho$-happy vertices of the heuristics, over the change of the number of vertices $n$, the proportion of happiness $\rho$ and the number of colours $k$. Figure~\ref{fig:heu-hap-n} shows that when the number of vertices grows, the average ratio of $\rho$-happy vertices in the output of the \hyperref[alg:ls]{LS} increases. Although {\sf Greedy} and {\sf NGC} still surpass \hyperref[alg:ls]{LS}, the difference between them becomes negligible when $n$ is large enough. Due to its increased resource requirements, it is not unexpected that the average ratio of $\rho$-happy vertices produced by {\sf Growth} drops when $n$ grows.

\begin{figure}
\captionsetup{size=small}
\begin{subfigure}{0.5\textwidth}
    \includegraphics[scale=0.5]{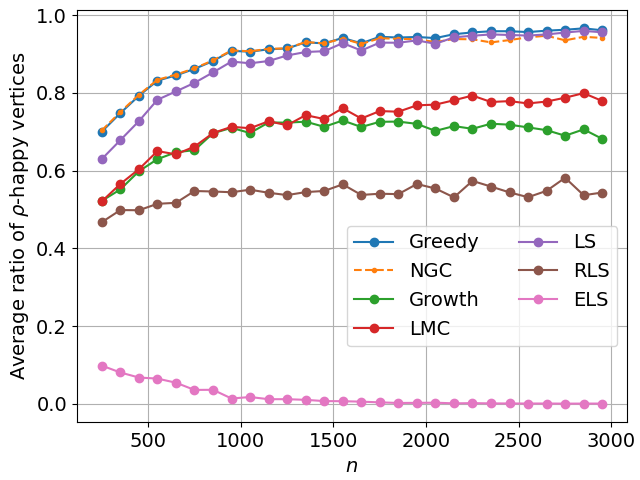}
    \caption{}\label{fig:heu-hap-n}
\end{subfigure} 
\hspace{5mm}
\begin{subfigure}{0.5\textwidth}
    \includegraphics[scale=0.5]{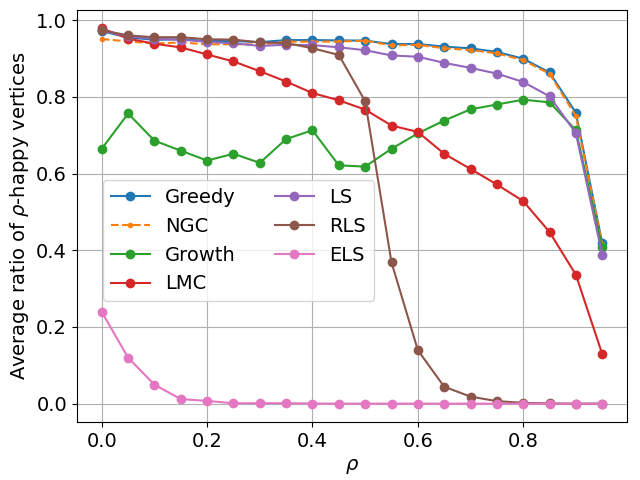}
    \caption{}\label{fig:heu-hap-r}
\end{subfigure} 

\begin{subfigure}{1\textwidth}
\centering
    \includegraphics[scale=0.5]{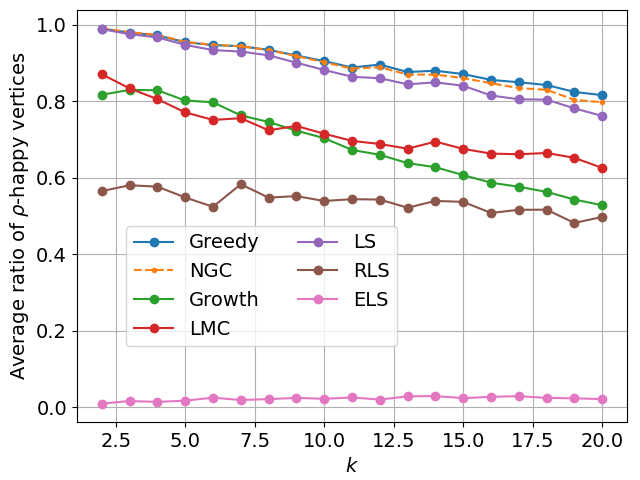}
    \caption{}\label{fig:heu-hap-k}
\end{subfigure} 

\caption{Comparison of the known heuristic algorithms for their average number of $\rho$-happy vertices considering (a) the number of vertices $n$ (b) the proportion of happiness $\rho$ and (c) the number of colours $k$.}\label{fig:heu-hap}
\end{figure}

Figure~\ref{fig:heu-hap-r} demonstrates the superiority of \hyperref[alg:ls]{LS} over the other heuristics when $\rho\leq 0.4$. It is only for $\rho>0.4$ that outputs of {\sf Greedy} and {\sf NGC} demonstrate higher ratios of $\rho$-happy vertices. The heuristics, including \hyperref[alg:ls]{LS}, experience sharp falls when $\rho$ approaches 1. Furthermore, on average, the ratios of $\rho$-happy vertices drop when the number of colours grows across all algorithms (Figure~\ref{fig:heu-hap-k}). However, \hyperref[alg:ls]{LS} is able to maintain a ratio higher than 0.75 when $k\leq 20$, slightly below {\sf Greedy} and {\sf NGC}. 

The CPU run-times of the tested heuristic algorithms are shown in Figure~\ref{fig:heu-time}. The two linear-time heuristics, \hyperref[alg:ls]{LS} and {\sf LMC}, perform almost similarly, hardly reaching 5 s for graphs on 3,000 vertices (Figure~\ref{fig:heu-time-n}). Supported by Figure~\ref{fig:heu-time-r}, changes of $\rho$ have no dramatic effect on the time consumed by the heuristics. Increasing the number of colours results in increases in running times of {\sf Growth} and {\sf NGC}, while slightly reducing those of \hyperref[alg:ls]{\sf LS} and {\sf LMC} (Figure~\ref{fig:heu-time-k}). In general, the running times of \hyperref[alg:ls]{\sf LS} and {\sf LMC} are much lower than the other heuristics. 

\begin{figure}
\captionsetup{size=small}
\begin{subfigure}{0.5\textwidth}
    \includegraphics[scale=0.5]{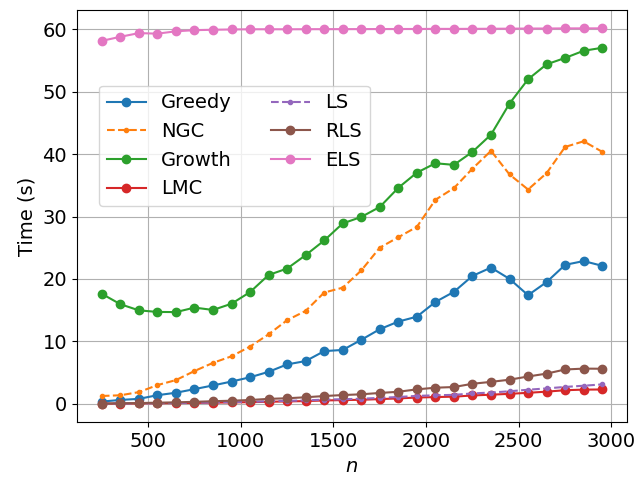}
    \caption{}\label{fig:heu-time-n}
\end{subfigure} 
\hspace{5mm}
\begin{subfigure}{0.5\textwidth}
    \includegraphics[scale=0.5]{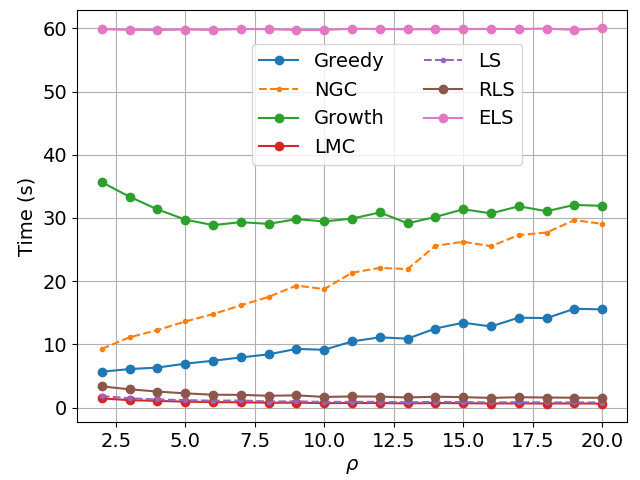}
    \caption{}\label{fig:heu-time-r}
\end{subfigure} 

\begin{subfigure}{1\textwidth}
\centering
    \includegraphics[scale=0.5]{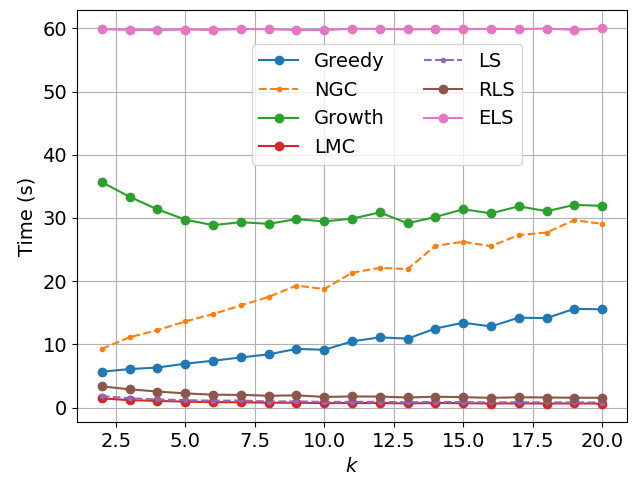}
    \caption{}\label{fig:heu-time-k}
\end{subfigure} 

\caption{Comparison of the known heuristic algorithms for their running times concerning (a) the number of vertices $n$, (b) the proportion of happiness $\rho$ and (c) the number of colours $k$.}\label{fig:heu-time}
\end{figure}

For the accuracy of community detection, Figure~\ref{fig:heu-comm} shows that \hyperref[alg:ls]{\sf LS} outperforms the other algorithms, except {\sf LMC}. The accuracy of community detection decreases when $n$ or $k$ grow (Figures~\ref{fig:heu-comm-n} and \ref{fig:heu-comm-k}), with no significant visible changes occurring when $\rho$ changes (Figure~\ref{fig:heu-comm-r}).

\begin{figure}
\captionsetup{size=small}
\begin{subfigure}{0.5\textwidth}
    \includegraphics[scale=0.5]{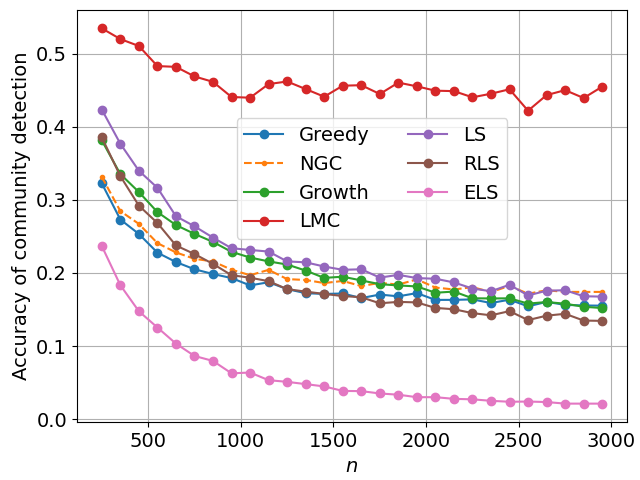}
    \caption{}\label{fig:heu-comm-n}
\end{subfigure} 
\hspace{5mm}
\begin{subfigure}{0.5\textwidth}
    \includegraphics[scale=0.5]{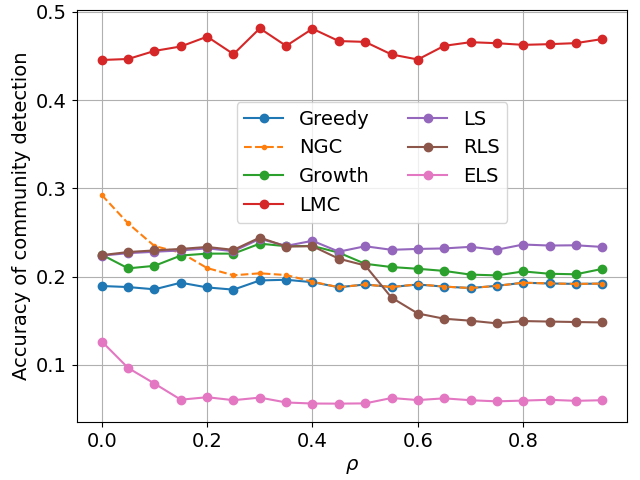}
    \caption{}\label{fig:heu-comm-r}
\end{subfigure} 

\begin{subfigure}{1\textwidth}
\centering
    \includegraphics[scale=0.5]{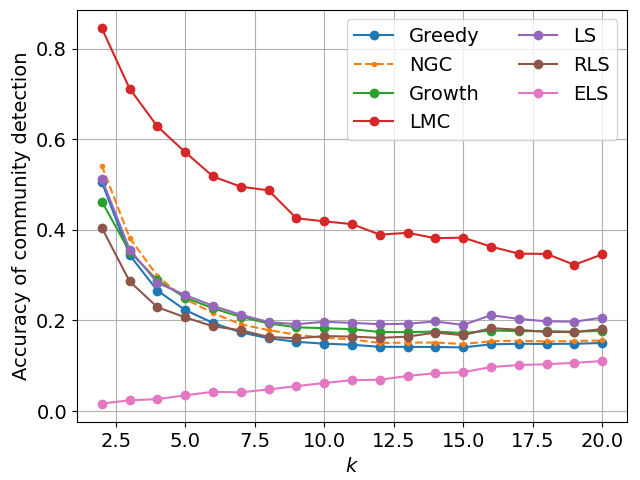}
    \caption{}\label{fig:heu-comm-k}
\end{subfigure} 

\caption{Comparison of the known heuristic algorithms for their accuracy of community detection considering (a) the number of vertices $n$ (b) the proportion of happiness $\rho$ and (c) the number of colours $k$.}\label{fig:heu-comm}

\end{figure}

\subsection{Analysis of improvement heuristics}

As mentioned, \hyperref[alg:ls]{LS}, \hyperref[alg:rls]{\sf RLS} and \hyperref[alg:els]{\sf ELS} are improvement algorithms, and can hence improve upon solutions produced by a heuristic algorithm. While outputs of {\sf LMC} and {\sf growth} can be enhanced by an improvement algorithm, the outputs of {\sf Greedy} and {\sf NGC} are unlikely to feed such improvements in $\rho$-happy colouring because almost all of their free vertices receive only one colour. Therefore, if given those solutions as inputs, the local search improvement algorithms improve nearly no $\rho$-happiness especially when the proportion of free vertices is high. Consequently, we exclude these two algorithms when considering improvements to their solutions.

The time limit of 60 seconds is often exhausted for \hyperref[alg:els]{\sf ELS} and {\sf Growth}), in which case, an incomplete colouring is reported. Then the three improvement algorithms run over the generated solutions. The time limit for this part is another 60 s (a total of 120 s). It must be noted that we have not considered running a lighter improvement algorithm over a solution of the same or heavier improvement algorithms. Therefore, we report no result of running \hyperref[alg:ls]{\sf LS} over solutions of \hyperref[alg:rls]{\sf RLS}. 

Figure~\ref{fig:imp-lg} illustrates the trends of the average ratio of $\rho$-happy vertices in solutions generated by {\sf LMC} (red) and {\sf Growth} (green) and their improved solutions by \hyperref[alg:ls]{\sf LS} (purple), \hyperref[alg:rls]{\sf RLS} (brown), and \hyperref[alg:els]{\sf ELS} (pink). Figures~\ref{fig:LMC_n_improved}, \ref{fig:LMC_r_improved} and \ref{fig:LMC_k_improved} respectively demonstrate the trends of this improvement for solutions of {\sf LMC} considering the number of vertices $n$, proportion of happiness $\rho$ and number of colours $k$. In contrast, Figures~\ref{fig:Growth_k_improved}, \ref{fig:Growth_r_improved} and \ref{fig:Growth_k_improved} show the similar trends for {\sf Growth}. Figures~\ref{fig:LMC_n_improved}, \ref{fig:LMC_r_improved}, and \ref{fig:LMC_k_improved} show that the improvement algorithms, \hyperref[alg:ls]{\sf LS} (purple) and \hyperref[alg:rls]{\sf RLS} (brown), on average, show similar performance when combined with {\sf LMC}. However, their performance is significantly different on solutions generated by {\sf Growth}, as Figures~\ref{fig:Growth_n_improved}, \ref{fig:Growth_r_improved}, and \ref{fig:Growth_k_improved}, clearly show that \hyperref[alg:rls]{\sf RLS} produces considerable improvements. \hyperref[alg:ls]{\sf LS} and \hyperref[alg:rls]{\sf RLS} typically provide better improvements compared to \hyperref[alg:els]{\sf ELS} (pink) (Figure~\ref{fig:imp-lg}). As the number of vertices increases, \hyperref[alg:els]{\sf ELS} seems to only be able to make small improvements (Figures~\ref{fig:LMC_n_improved} and \ref{fig:Growth_n_improved}). One possible explanation is due to the large time requirements of \hyperref[alg:els]{\sf ELS}, which often exhausts the time limit when there are 500 or more vertices (Figure~\ref{fig:heu-time-n}).

\begin{figure}
\captionsetup{size=small}
\begin{subfigure}{0.48\textwidth}
    \includegraphics[scale=0.48]{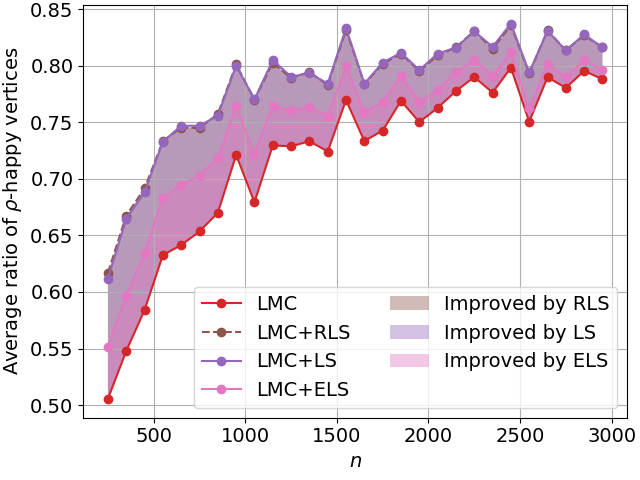}
    \caption{}\label{fig:LMC_n_improved}
\end{subfigure} 
\hspace{5mm}
\begin{subfigure}{0.48\textwidth}
    \includegraphics[scale=0.48]{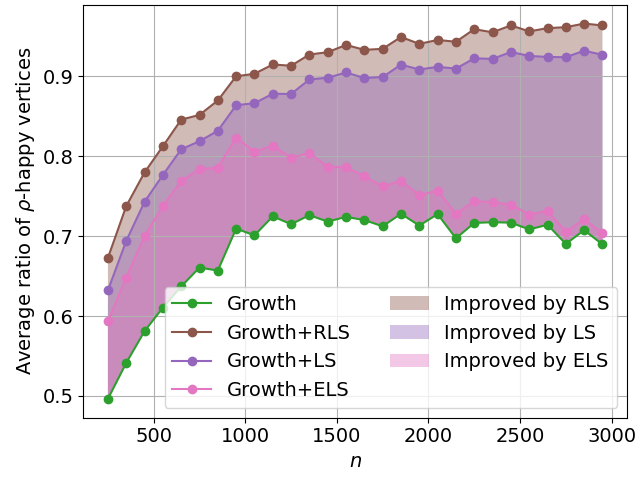}
    \caption{}\label{fig:Growth_n_improved}
\end{subfigure} 

\begin{subfigure}{0.48\textwidth}  
    \includegraphics[scale=0.48]{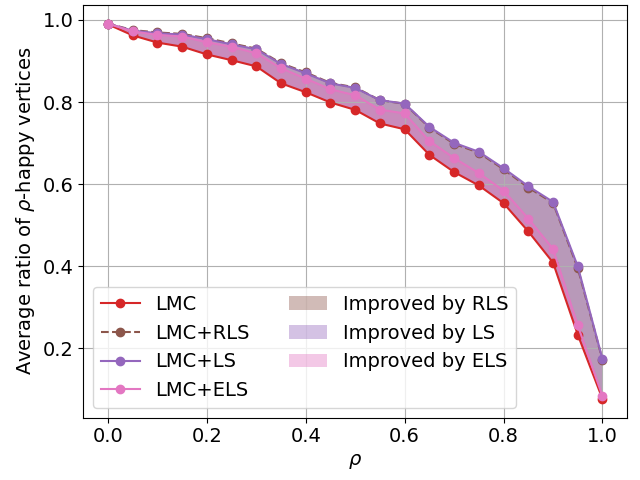}
    \caption{}\label{fig:LMC_r_improved}
\end{subfigure} 
\hspace{5mm}
\begin{subfigure}{0.48\textwidth}  
    \includegraphics[scale=0.48]{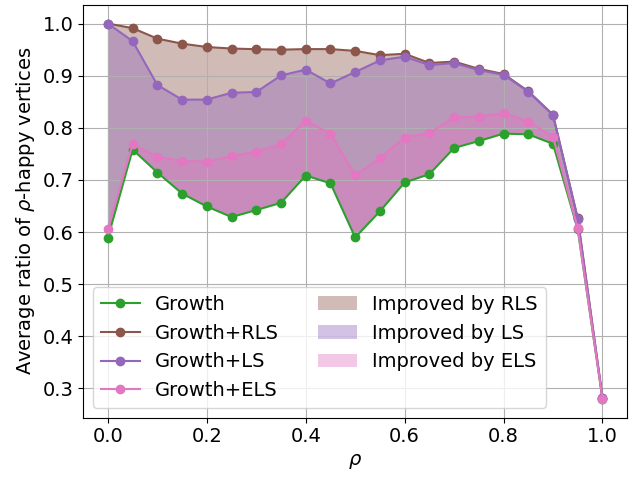}
    \caption{}\label{fig:Growth_r_improved}
\end{subfigure}

\begin{subfigure}{0.48\textwidth}  
    \includegraphics[scale=0.48]{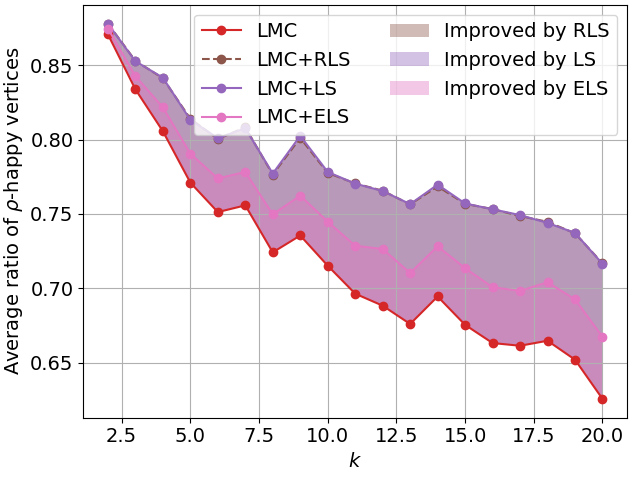}
    \caption{}\label{fig:LMC_k_improved}
\end{subfigure} 
\hspace{5mm}
\begin{subfigure}{0.48\textwidth}  
    \includegraphics[scale=0.48]{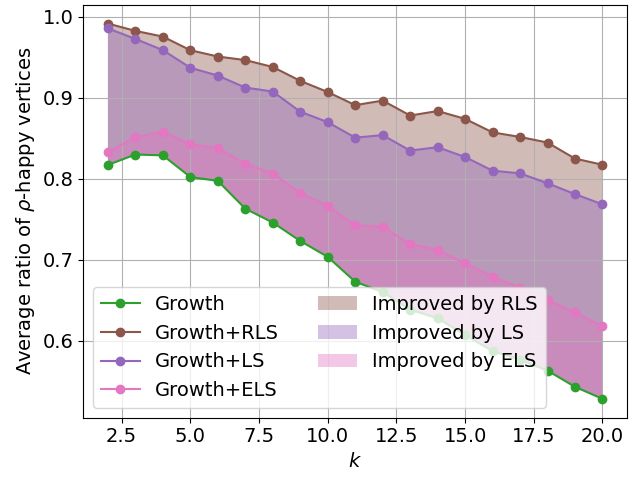}
    \caption{}\label{fig:Growth_k_improved}
\end{subfigure} 
    
    \caption{Results of improvements by \hyperref[alg:ls]{\sf LS} (Algorithm~\ref{alg:ls}) on the ratio of $\rho$-happy vertices of the {\sf LMC} with respect to (a) the number of vertices $n$ (c) the proportion of happiness $\rho$ and (e) the number of colours $k$. The same charts for improvements by \hyperref[alg:ls]{\sf LS} on the outputs of {\sf Growth} are demonstrated in figures (b), (d) and (f). 
    }\label{fig:imp-lg}
    
\end{figure}

When $\rho$ is very small, a solution produced by {\sf LMC} usually has a high ratio of $\rho$-happy vertices, hence, there is little room for improvement. In contrast, when $\rho$ approaches 1, {\sf LMC} generates many fewer $\rho$-happy vertices, and therefore the improvement algorithms can enhance the quality of the solution output by {\sf LMC} (Figure~\ref{fig:LMC_r_improved}). On the other hand, when $\rho$ is very small, \hyperref[alg:ls]{\sf LS} and \hyperref[alg:rls]{\sf RLS} dramatically improve upon the solution generated by {\sf Growth}, making it very close to a complete $\rho$-happy colouring (Figure~\ref{fig:Growth_r_improved}).  While \hyperref[alg:rls]{\sf RLS} shows a smooth trend that is not contingent on the trend of the inputs, \hyperref[alg:ls]{\sf LS} improvement fluctuates along with that of {\sf Growth}, when $\rho\leq 0.8$. When $\rho$ approaches 1, none of the improvement algorithms can enhance the solutions generated by {\sf Growth}. Furthermore, when $\rho\geq 0.5$, both \hyperref[alg:ls]{\sf LS} and \hyperref[alg:rls]{\sf RLS} show similar levels of improvement over the solutions produced by {\sf Growth}, but when $\rho\leq 0.5$, the \hyperref[alg:rls]{\sf RLS} improvement is almost always better than that of \hyperref[alg:ls]{\sf LS}.

In our tests, the number of colours is $2\leq k\leq 20$. By Figures~\ref{fig:LMC_k_improved} and \ref{fig:Growth_k_improved}, it can be seen that an increase in $k$ results in a decrease of the ratio of $\rho$-happy vertices in the solutions generated by both {\sf LMC} and {\sf Growth}. Thus, when $k$ is larger, there is a larger room for improvements of the ratio. Interestingly, this increase in the number of colours causes almost similar fluctuations in the amounts of improvements by the three improvement algorithms, which align with the input colourings generated by {\sf LMC} and {\sf Growth}.

Figure~\ref{fig:imp} provides greater insight into the improvements that can be achieved by our three improvement algorithms. While the focus of Figure~\ref{fig:imp-lg} is on the average values of the ratio of $\rho$-happy vertices, it focuses on the ranges of improvements (the number of improved vertices divided by the total number of vertices $n$). We observe that the minimum improvement by \hyperref[alg:ls]{\sf LS} is usually close to 0, but the maximum improvement can equal the total number of vertices. This implies that there are cases where {\sf LMC} or {\sf Growth} generate colourings with almost no $\rho$-happy vertex, but by integrating with an improvement algorithm, they are able to achieve solutions with complete $\rho$-happy colourings. This interesting property is fundamental to devise search-based and population-based metaheuristics. 

\begin{figure}
\captionsetup{size=small}
\begin{subfigure}{0.48\textwidth}
    \includegraphics[scale=0.48]{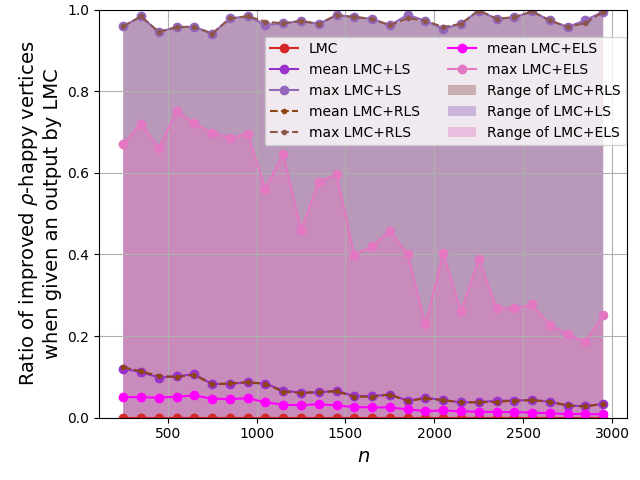}
    \caption{}\label{fig:LMC_n_imp}
\end{subfigure} 
\hspace{5mm}
\begin{subfigure}{0.48\textwidth}
    \includegraphics[scale=0.48]{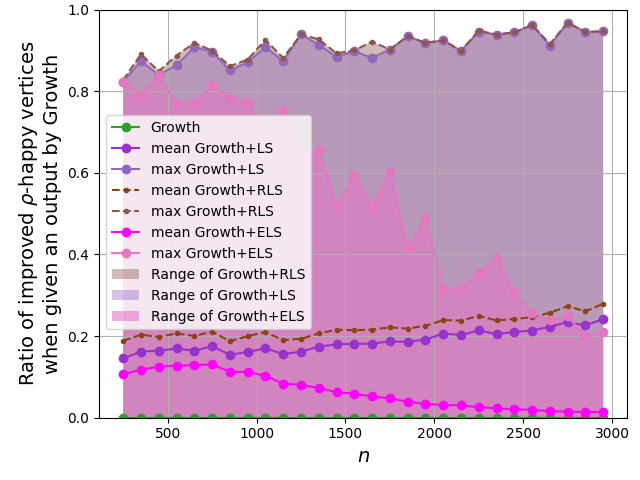}
    \caption{}\label{fig:Growth_n_imp}
\end{subfigure} 

\begin{subfigure}{0.48\textwidth}  
    \includegraphics[scale=0.48]{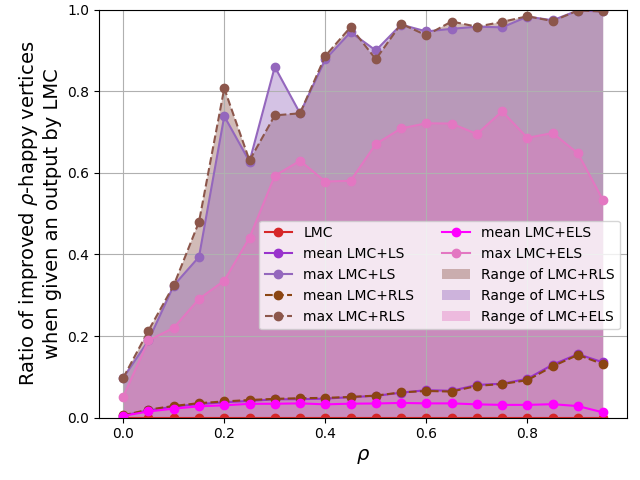}
    \caption{}\label{fig:LMC_r_imp_}
\end{subfigure} 
\hspace{5mm}
\begin{subfigure}{0.48\textwidth}  
    \includegraphics[scale=0.48]{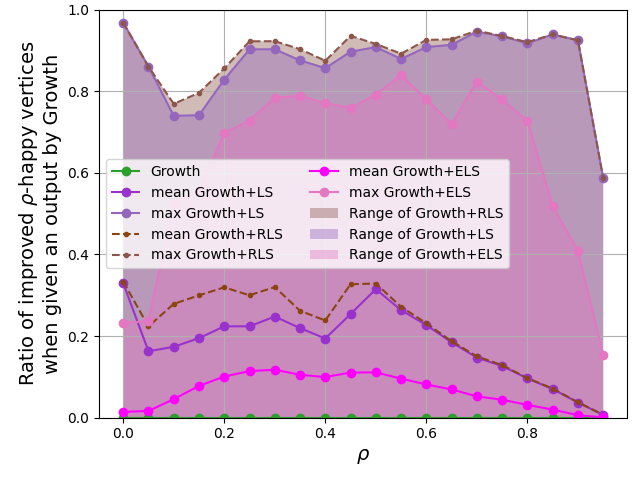}
    \caption{}\label{fig:Growth_r_imp}
\end{subfigure}

\begin{subfigure}{0.48\textwidth}  
    \includegraphics[scale=0.48]{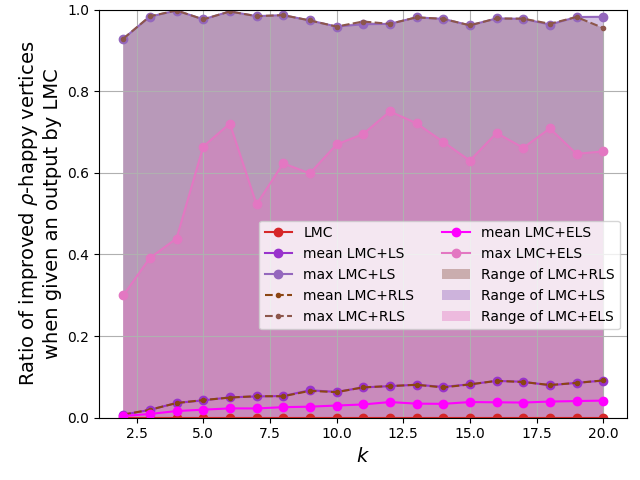}
    \caption{}\label{fig:LMC_k_imp}
\end{subfigure} 
\hspace{5mm}
\begin{subfigure}{0.48\textwidth}  
    \includegraphics[scale=0.48]{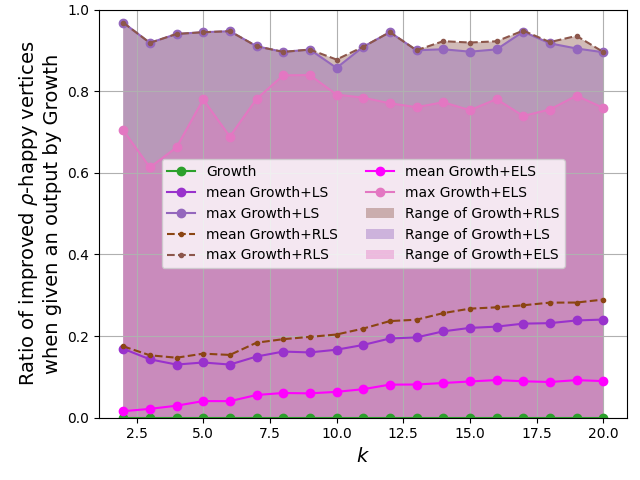}
    \caption{}\label{fig:Growth_k_imp}
\end{subfigure} 
    
    \caption{Minimum, mean and maximum enhancements by the improvement algorithms (\hyperref[alg:ls]{\sf LS}, \hyperref[alg:rls]{\sf RLS}, and \hyperref[alg:els]{\sf ELS}) on the ratio of $\rho$-happy vertices over the output of {\sf LMC} with respect to (a) $n$, (c) $\rho$, and (e) $k$. The same charts for enhancing the outputs of {\sf Growth} are shown in Figures (b), (d) and (f), respectively.
    }\label{fig:imp}
    
\end{figure}

All charts in Figure~\ref{fig:imp} show that \hyperref[alg:els]{\sf ELS} (pink) always falls behind \hyperref[alg:ls]{\sf LS} (purple) and \hyperref[alg:rls]{\sf RLS} (brown), for the average and maximum amount of improvements. For improving {\sf LMC}, both \hyperref[alg:ls]{\sf LS} and \hyperref[alg:rls]{\sf RLS} perform at a similar level considering their average and maximum values (Figures~\ref{fig:LMC_n_imp}, \ref{fig:LMC_r_imp_}, and \ref{fig:LMC_k_imp}). For their maximum values, there are only a few occasional differences, which are clearly visible in Figure~\ref{fig:LMC_r_imp_} when $0.1\leq \rho\leq0.4$. 

Regarding improving the solutions produced by {\sf Growth}, it is clearly visible that, on average, \hyperref[alg:rls]{\sf RLS} finds more $\rho$-happy vertices than \hyperref[alg:ls]{\sf LS}, and occasionally, even its maximum number of improved $\rho$-happy vertices is higher that of \hyperref[alg:ls]{\sf LS} (Figures~\ref{fig:Growth_n_imp}, \ref{fig:Growth_r_imp}, and \ref{fig:Growth_k_imp}). However, these maximum values are very close and sometimes even indistinguishable, for example, when $n\geq 2000$ (Figure~\ref{fig:Growth_n_imp}).

\begin{table}[H]
\centering
\begin{NiceTabular}{*{4}{c}*{4}{c}}[hvlines, code-before =
\rectanglecolor{vibtilg!50}{1-1}{1-8}
\rectanglecolor{C_LMC!40}{2-1}{4-4}
\rectanglecolor{C_LS!50}{2-5}{2-8}
\rectanglecolor{C_ELS!35}{3-5}{3-8}
\rectanglecolor{C_PLS!50}{4-5}{4-8}
\rectanglecolor{C_growth!40}{5-1}{7-4}
\rectanglecolor{C_LS!50}{5-5}{5-8}
\rectanglecolor{C_ELS!35}{6-5}{6-8}
\rectanglecolor{C_PLS!50}{7-5}{7-8}
\rectanglecolor{C_rnd!40}{8-1}{10-4}
\rectanglecolor{C_LS!50}{8-5}{8-8}
\rectanglecolor{C_ELS!35}{9-5}{9-8}
\rectanglecolor{C_PLS!50}{10-5}{10-8}
\rectanglecolor{C_LS!50}{11-1}{12-4}
\rectanglecolor{C_ELS!35}{11-5}{11-8}
\rectanglecolor{C_PLS!50}{12-5}{12-8}
\rectanglecolor{C_ELS!35}{13-1}{13-4}
\rectanglecolor{C_PLS!50}{13-5}{13-8}
\rectanglecolor{C_PLS!50}{14-1}{14-4}
\rectanglecolor{black!60}{14-5}{14-8}]
{\bf Base} & $\boldsymbol{\alpha(\sigma)}^\ast$ & {\bf ACD$^\dagger$} & {\bf Time (s)} & {\bf Improvement} &$\boldsymbol{\alpha(\tilde{\sigma})}^\ast$ & {\bf ACD$^\dagger$} & {\bf Total Time (s)}   \\ \hline
\Block{3-1}{\sf LMC} & \Block{3-1}{0.7187} & \Block{3-1}{0.4618} & \Block{3-1}{0.798} & 
{\sf LS} & 0.7825  &  0.4648  & 1.845  \\ 
& & & &{\sf RLS} & 0.7823 &  0.4648  & 2.451 \\ 
& & & & {\sf ELS} & 0.7476 & 0.4631  & 38.01  \\ \hline
\Block{3-1}{\sf Growth} & \Block{3-1}{0.6843} & \Block{3-1}{0.216} & \Block{3-1}{30.831}&  
{\sf LS} & 0.8692  & 0.2476  & 31.856 \\ 
& & & & {\sf RLS} &  0.9044  & 0.2571  & 33.407 \\ 
& & & & {\sf ELS} & 0.7497  & 0.2228  & 79.137  \\ \hline
\Block{3-1}{\sf Random}& \Block{3-1}{0.1376} & \Block{3-1}{0.19} &\Block{3-1}{1.421}& {\sf LS} & 0.8088 & 0.2935 & 2.806\\
& & & & {\sf RLS} & 0.6037 & 0.2806 & 3.876 \\
& & & & {\sf ELS} & 0.1649 & 0.195 & 57.095 \\ \hline
\Block{2-1}{\sf LS} & \Block{2-1}{0.8771} & \Block{2-1}{0.2325} & \Block{2-1}{1.009}&  
{\sf RLS} & 0.8929 & 0.2345  & 2.138  \\ 
& & & & {\sf ELS} & 0.8843 & 0.2337  & 14.275  \\ \hline
{\sf RLS} & 0.5377& 0.1948 & 1.934 & {\sf ELS} & 0.5378  & 0.1948  & 29.67  \\ \hline
{\sf ELS} & 0.0216 & 0.066 & 59.845 & & & & \\ \hline

\end{NiceTabular}
\medskip
\captionsetup{singlelinecheck=off}
\caption[foo bar]{Average values of ratio of $\rho$-happy vertices, accuracy of community detection, and running times of the heuristic algorithms {\sf LMC}, {\sf Growth}, {\sf Random}, \hyperref[alg:ls]{\sf LS}, \hyperref[alg:rls]{\sf RLS}, and \hyperref[alg:els]{\sf ELS}, followed by improvement algorithms \hyperref[alg:ls]{\sf LS}, \hyperref[alg:rls]{\sf RLS}, and \hyperref[alg:els]{\sf ELS}. Each row shows how much an improvement algorithm enhanced the ratio of $\rho$-happy vertices and the accuracy of community detection, and how much time consumed for such improvement. 
\begin{itemize}
\item[$\ast$.] $\alpha(\sigma)=\frac{H_\rho (\sigma)}{n}$, the ratio of $\rho$-happy vertices.
\item[$\dagger$.] Accuracy of community detection, see Equation~\ref{eq:acd}.
\end{itemize}}
\label{table:1}
\end{table}

Table~\ref{table:1} contains the averages of the ratio of $\rho$-happy vertices $\alpha(\sigma)$, the accuracy of community detection $ACD(\sigma)$, and the running time of the heuristics and improvements. It shows these average values for the solutions obtained by {\sf LMC}, {\sf Growth}, {\sf Random} (randomly assigns colours to the vertices), \hyperref[alg:ls]{\sf LS}, \hyperref[alg:rls]{\sf RLS}, and \hyperref[alg:els]{\sf ELS} in the four columns on the left-hand side of the table, and the effects of the improvement algorithms \hyperref[alg:ls]{\sf LS}, \hyperref[alg:rls]{\sf RLS}, and \hyperref[alg:els]{\sf ELS} on them. Figures~\ref{fig:grouped-imp-hap}, \ref{fig:grouped-imp-comm}, and \ref{fig:grouped-imp-time} help to analyse data from Table~\ref{table:1}.

\begin{figure}
    \centering
    \includegraphics[scale=0.5]{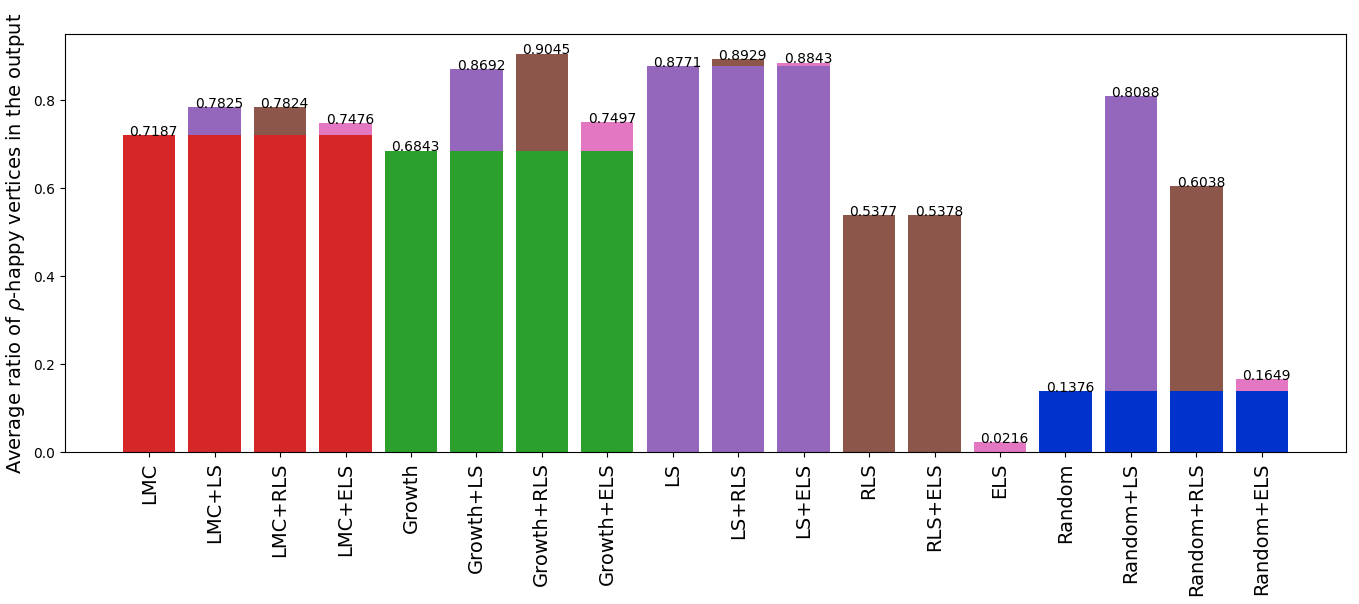}
    \caption{This chart shows how much each of the improvement algorithms, i.e., \hyperref[alg:ls]{\sf LS}, \hyperref[alg:rls]{\sf RLS} and \hyperref[alg:els]{\sf ELS}, improves the ratio of $\rho$-happy vertices of the outputs generated by the heuristics, namely {\sf LMC}, {\sf Growth}, {\sf LS} and \hyperref[alg:rls]{\sf RLS}.}
    \label{fig:grouped-imp-hap}
\end{figure}

The highest average $\rho$-happy colouring ratio is for {\sf Growth+ELS} with around \% 90. This is caused by the massive \% 32 improvement caused by \hyperref[alg:rls]{\sf RLS} over the not impressive average result \% 68 of {\sf Growth}. However, this high result comes with a high time cost of an average of 33.4 seconds, 30.8 seconds of which are used by {\sf Growth} itself. The second best average of the ratio of $\rho$-happy vertices is {\sf LS+ELS} with more than \% 89. The time cost is quite low, only 2.138 s out of 120.0 s of the total time limit. These are visible in Figure~\ref{fig:grouped-imp-hap}, which illustrates the average ratios of $\rho$-happiness of the input solutions and the averages of the improvements made by the improvement algorithms.

\begin{figure}
    \centering
    \includegraphics[scale=0.48]{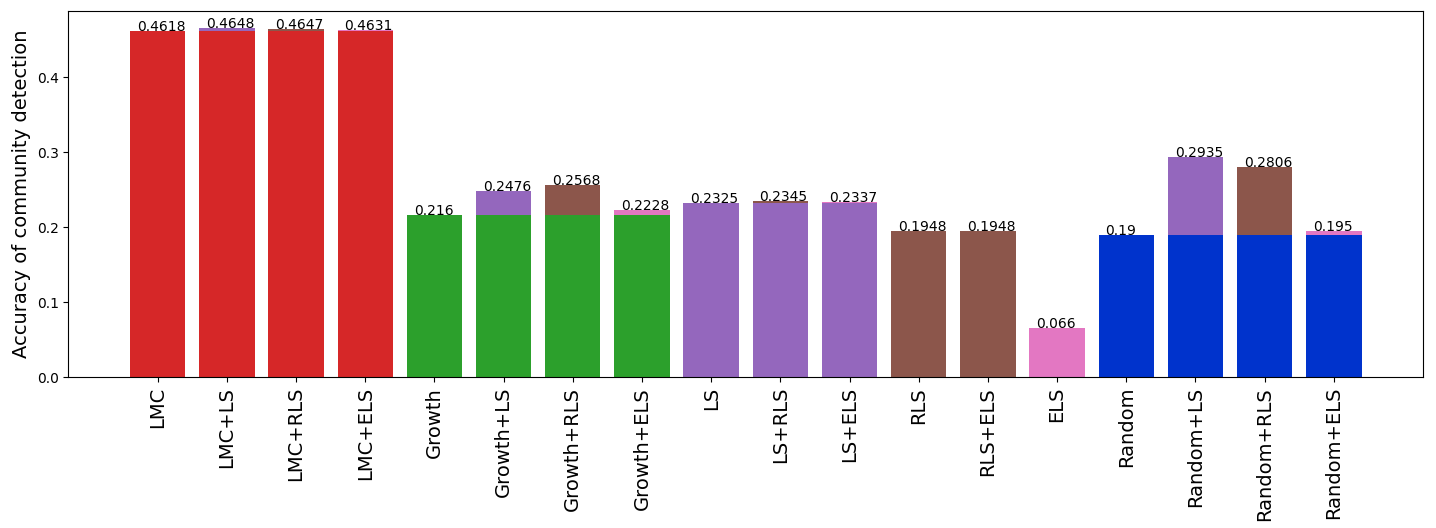}
    \caption{This chart shows how each of the improvement algorithms, i.e., \hyperref[alg:ls]{\sf LS}, \hyperref[alg:rls]{\sf RLS} and \hyperref[alg:els]{\sf ELS}, improve the accuracy of community detection of the heuristics, namely {\sf LMC}, {\sf Growth}, {\sf LS} and \hyperref[alg:rls]{\sf RLS}.}
    \label{fig:grouped-imp-comm}
\end{figure}

Evidenced by the data in Table~\ref{table:1}, none of our improvement algorithms negatively impacts the average accuracy of community detection (i.e., the accuracy of community detection is always as good as the original solution). Among these, the best average community detection is obtained from running {\sf LMC} with a very small enhancement by an improvement algorithm. Figure~\ref{fig:grouped-imp-comm} demonstrates how the three improvement algorithms affect the average community detection of the solutions given by the heuristics. Based on it, the best result by far for the average accuracy of community detection belongs to algorithms start with a solution generated by {\sf LMC}.

\begin{figure}
    \centering
    \includegraphics[scale=0.5]{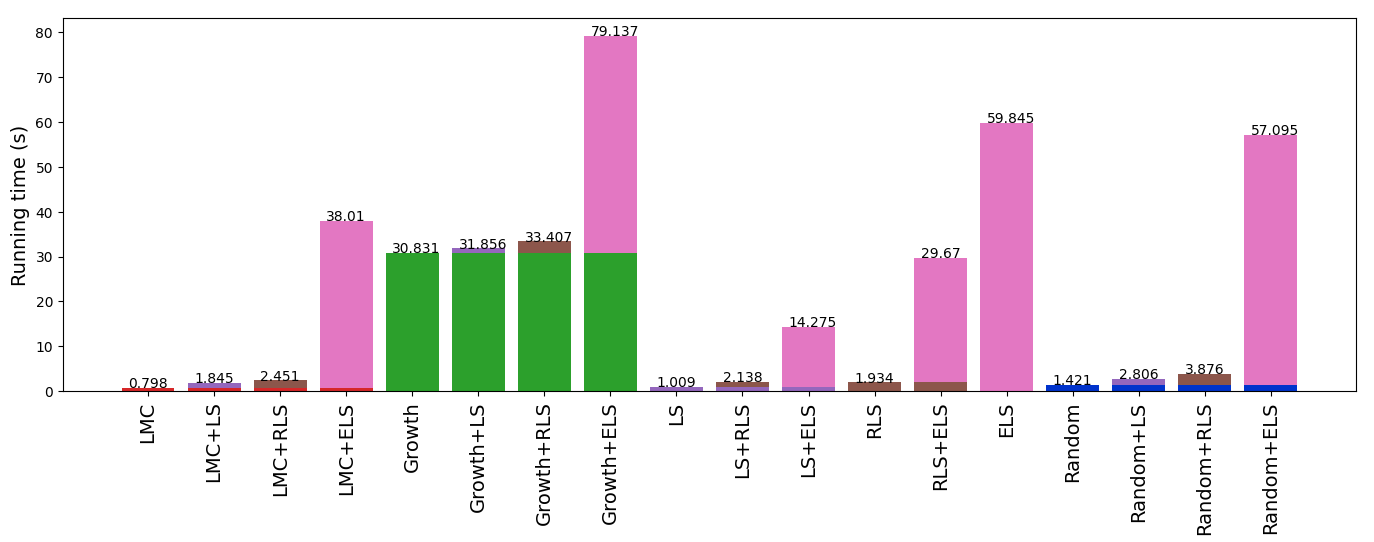}
    \caption{This chart shows the running time of the improvement algorithms, i.e., \hyperref[alg:ls]{\sf LS}, \hyperref[alg:rls]{\sf RLS} and \hyperref[alg:els]{\sf ELS}, trying to improve the number of $\rho$-happy vertices in the output of the heuristics, namely {\sf LMC}, {\sf Growth}, {\sf LS} and \hyperref[alg:rls]{\sf RLS}.}
    \label{fig:grouped-imp-time}
\end{figure}

 The best average running time belongs to {\sf LMC+LS} with the total average of 1.845 s out of 120 s. The second best is {\sf LS+ELS} with 2.138 s with an impressive \% 89 of the average ratio of $\rho$-happy vertices. The best improvement algorithm with respect to its running time --- among the three improvement algorithms we consider here --- is \hyperref[alg:ls]{\sf LS}, with an average of around 1.0 s. The average time of \hyperref[alg:rls]{\sf RLS} is more than twice that of \hyperref[alg:ls]{\sf LS}. However, the average running time of \hyperref[alg:els]{\sf ELS} is almost always near the time limit, indicating that its outputs must occasionally be reported from an incomplete search. Figure~\ref{fig:grouped-imp-time} presents an illustration of the average running time of the improvement algorithms over the solutions given by heuristics.

\subsection{Investigating the effects of the number of precoloured vertices}\label{sec:pcc}

In this section, we consider the effects of increasing the number of precoloured vertices per community ($pcc$). Visiting our previous analyses (Section~\ref{sec:details}) it seemed interesting to see the effects of such an increase in this parameter. Therefore, we conduct experiments on a new set of graphs where the $pcc$ is set to be $1\leq pcc\leq \left\lfloor \frac{n}{k}\right\rfloor$ and maintaining all other parameter settings.

\begin{figure}
\captionsetup{size=small}
\begin{subfigure}{0.48\textwidth}
    \includegraphics[scale=0.5]{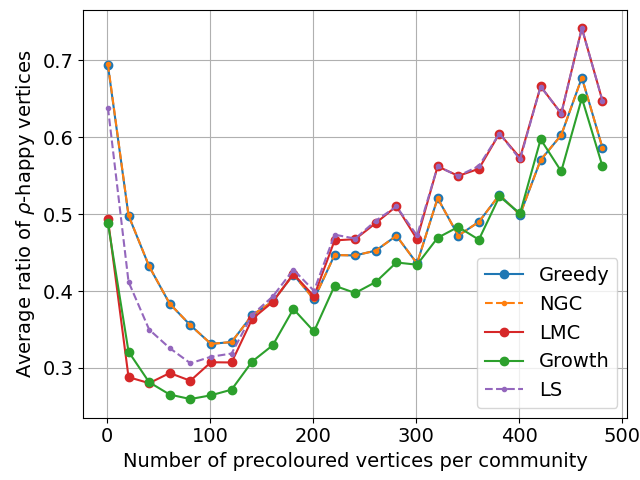}
    \caption{}\label{fig:pcc-happy}
\end{subfigure} 
\hspace{5mm}
\begin{subfigure}{0.48\textwidth}
    \includegraphics[scale=0.5]{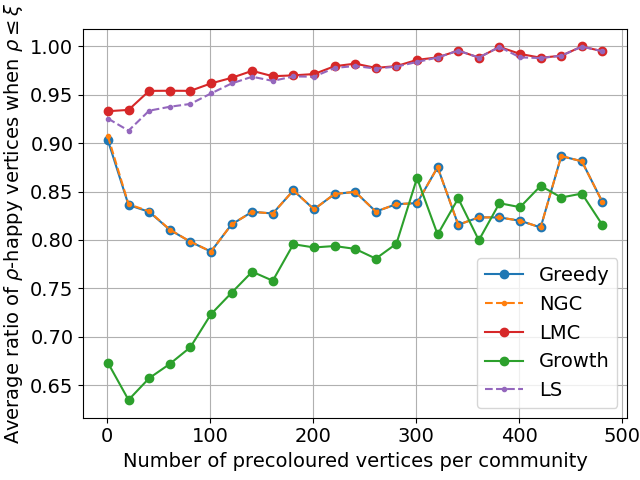}
    \caption{}\label{fig:pcc-happy-xi}
\end{subfigure} 

\begin{subfigure}{0.48\textwidth}
    \includegraphics[scale=0.5]{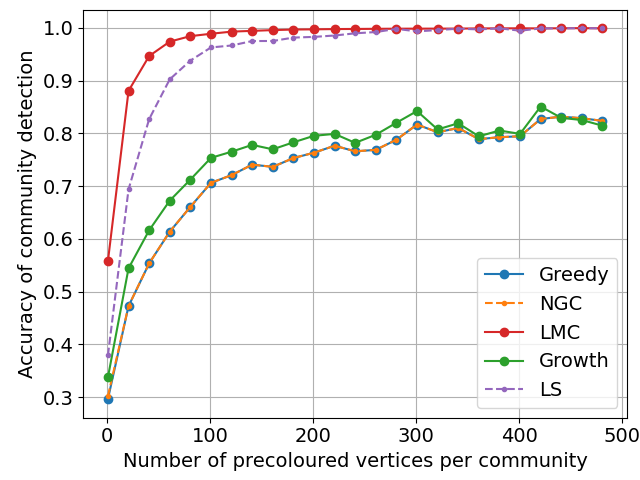}
    \caption{}\label{fig:pcc-comm}
\end{subfigure} 
\hspace{5mm}
\begin{subfigure}{0.48\textwidth}
    \includegraphics[scale=0.5]{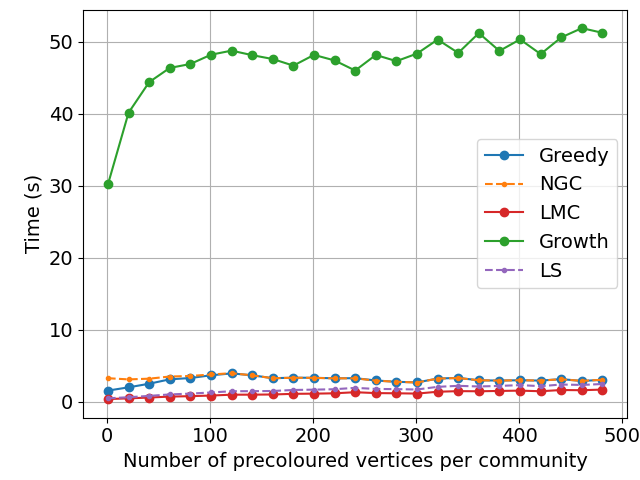}
    \caption{}\label{fig:pcc-time}
\end{subfigure} 
    \caption{The effects of increasing the number of precoloured vertices per community; (a) average ratio of $\rho$-happy vertices in general, 
    (b) average ratio of $\rho$-happy vertices when $\rho\leq \xi$, (c) average accuracy of community detection, and (d) algorithms' running time. }
    \label{fig:pcc}
\end{figure}

The results of running the algorithms {\sf LMC} (red), {\sf Growth} (green), {\sf Greedy} (blue), {\sf NGC} (orange, dashed), and \hyperref[alg:ls]{\sf LS} (purple, dashed) are presented in Figure~\ref{fig:pcc}. It can be seen in Figure~\ref{fig:pcc-happy} that when $pcc$ is very low, all heuristics generate a high ratio of happy vertices, with {\sf Greedy} and {\sf NGC} performing best. As $pcc$ approaches 100, these ratios drop dramatically, enabling {\sf LMC} to outperform the other algorithms. In the range $150\leq pcc \leq 500$, {\sf LMC} performs best while {\sf Greedy} and {\sf NGC} are unable to even compete with {\sf Growth}. In general, when $pcc > 120$, the ratios of $\rho$-happy vertices show an increase pattern until $pcc$ reaches 500. 

When $\rho\leq \xi$, the colouring induced by the communities can make every vertex $\rho$-happy~\cite{SHEKARRIZ2025106893}, hence, increasing $pcc$ results in higher ratios of $\rho$-happy colouring in the solutions generated by {\sf LMC}, {\sf Growth} and \hyperref[alg:ls]{\sf LS}, but less so for {\sf Greedy} and {\sf NGC} (Figure~\ref{fig:pcc-happy-xi}). While {\sf LMC} and \hyperref[alg:ls]{\sf LS} perform best, their difference becomes negligible as $pcc$ increases.

Increasing $pcc$ has an obvious effect on the accuracy of community detection (Figure~\ref{fig:pcc-comm}). Although {\sf LMC} and \hyperref[alg:ls]{\sf LS} detect communities with high precision when $pcc$ increases, {\sf Greedy}, {\sf NGC} and {\sf Growth} cannot reach an accuracy of more than 85\%, although their community detection precision also improves with an increase in $pcc$.

The running time of the tested algorithms experiences some slight effects with $pcc$ being increased; see Figure~\ref{fig:pcc-time}. The average running times of {\sf LMC} and \hyperref[alg:ls]{\sf LS} increase very slightly with an increase of $pcc$, but hardly reach 4 s as $pcc$ approaches 500. {\sf Greedy} and {\sf NGC}'s running times remain almost neutral to $pcc$'s change, while {\sf Growth}'s average running time increases sharply from 30 s to 50 s when $pcc$ increases from 1 to 120, and then fluctuates around 50 s as $pcc$ further increases. 

To sum-up, the reader must be reminded that the good performance of {\sf Greedy} and {\sf NGC} in the former tests is due to a large number of free vertices. Assigning only one colour to all free vertices generates a high number of $\rho$-happy vertices when the majority of the vertices are free vertices. Consequently, it is not worth considering {\sf Greedy} and {\sf NGC} to design metaheuristics and mathematical strategies for soft happy colouring when the number of precoloured vertices can be high. Another reason for this is that these algorithms have no randomness employed to generate a solution, so running them several times does not make any difference. 
Moreover, their outputs have nothing to do with the graphs' community structure, the main motivation of defining and studying (soft) happy colouring. Additionally, {\sf LMC} and \hyperref[alg:ls]{\sf LS} have lower time complexity, higher degrees of correlation with graph's communities, and higher ratios of $\rho$-happy vertices (especially when $\rho\leq \xi$), whose results can be different each time they are run.

\section{Conclusion}\label{sec:conc} 

This paper investigates soft happy colouring for graphs in the SBM, contributing to both theory and practice. One of the theoretical results shows that the accuracy of community detection of a complete $\rho$-happy colouring is expected to increase when $\rho$ becomes larger. Another result is that an algorithm can almost never find a complete happy coloring $\rho$ when the number of vertices is large and $\rho$ is greater than the threshold $$\tilde{\xi}=\frac{p}{p+(k-1)q}.$$ These theoretical results are experimentally verified for a large set of randomly generated partially coloured graphs in the SBM. Hence, practically, algorithms designed to detect communities of a graph modelled by the SBM should focus on $\rho$ being slightly less than $\tilde{\xi}$. 

We introduced three new algorithms, \hyperref[alg:ls]{\sf LS}, \hyperref[alg:rls]{\sf RLS}, and \hyperref[alg:els]{\sf ELS}, that can alter a (probably partial) solution and generate a higher quality solution. We tested these and other existing algorithms on a large set of randomly generated partially coloured graphs in the SBM. The results show that \hyperref[alg:ls]{\sf LS} has low time requirements and is effective as a heuristic or an improvement algorithm, especially when $\rho\leq \tilde{\xi}$ or given a random input. Moreover, \hyperref[alg:rls]{\sf RLS} demonstrates improved performance only when it is provided a solution of {\sf Growth}'s as an input. The algorithm \hyperref[alg:els]{\sf ELS} showed poor performance due to its high time complexity. 

In practice, \hyperref[alg:ls]{\sf LS} performs comparatively well considering the other heuristics. It not only generates high number of $\rho$-happy vertices, but is also very fast. The test results, along with short run-time requirements of \hyperref[alg:ls]{\sf LS} mean that it is a good candidate to be used in search-based and population-based metaheuristic algorithms. It can be expected that the ensuing results can be of similar or even better quality solutions, owing to exploring substantially larger areas of the search space. 

Since finding a fast and effective heuristic is essential for designing metaheuristics and matheuristics, such heuristics can be used to generate a population of solutions for population-based metaheuristics or to iterate improvements in perturbation-based algorithms. Therefore, we propose using \hyperref[alg:ls]{\sf LS} to design more advanced algorithms for soft happy colouring.

\section*{Declarations of interest} 
None

\bibliographystyle{plain}
\bibliography{HC.bib}
\end{document}